\theoremstyle{plain}
\newtheorem{theorem}{Theorem}[section]
\newtheorem{lemma}[theorem]{Lemma}
\newtheorem{corollary}[theorem]{Corollary}
\newtheorem{proposition}[theorem]{Proposition}
\theoremstyle{remark}
\newtheorem{definition}[theorem]{Definition}
\newtheorem*{remark}{Remark}
\DeclareMathOperator*{\argmin}{arg\,min}
\newcommand*\bigcdot{\mathpalette\bigcdot@{.5}}
\newcommand*\bigcdot@[2]{\mathbin{\vcenter{\hbox{\scalebox{#2}{$\m@th#1\bullet$}}}}}
\begin{document}

\begin{frontmatter}
\title{Statistical matching and subclassification with a continuous dose: characterization, algorithm, and application to a health outcomes study}
\runtitle{Statistical matching and subclassification with a continuous dose}

\begin{aug}
\author[A]{\fnms{Bo} \snm{Zhang}\ead[label=e1]{bozhan@wharton.upenn.edu}},
\author[B]{\fnms{Emily J.} \snm{Mackay}\ead[label=e2,mark]{Emily.Mackay@pennmedicine.upenn.edu}}
\and
\author[C]{\fnms{Mike} \snm{Baiocchi}\ead[label=e3,mark]{baiocchi@stanford.edu}}
\address[A]{Department of Statistics and Data Science, The Wharton School, University of Pennsylvania,
\printead{e1}}

\address[B]{Department of Anesthesiology and Critical Care, Perelman School of Medicine, University of Pennsylvania,
\printead{e2}}

\address[C]{Department of Epidemiology and Population Health, Stanford University,
\printead{e3}}
\end{aug}

\begin{abstract}
Subclassification and matching are often used in empirical studies to adjust for observed covariates; however, they are largely restricted to relatively simple study designs with a binary treatment and less developed for designs with a continuous exposure. Matching with exposure doses is particularly useful in instrumental variable designs and in understanding the dose-response relationships. In this article, we propose two criteria for optimal subclassification based on subclass homogeneity in the context of having a continuous exposure dose, and propose an efficient polynomial-time algorithm that is guaranteed to find an optimal subclassification with respect to one criterion and serves as a 2-approximation algorithm for the other criterion. We discuss how to incorporate dose and use appropriate penalties to control the number of subclasses in the design. Via extensive simulations, we systematically compare our proposed design to optimal non-bipartite pair matching, and demonstrate that combining our proposed subclassification scheme with regression adjustment helps reduce model dependence for parametric causal inference with a continuous dose. We apply the new design and associated randomization-based inferential procedure to study the effect of transesophageal echocardiography (TEE) monitoring during coronary artery bypass graft (CABG) surgery on patients' post-surgery clinical outcomes using Medicare and Medicaid claims data, and find evidence that TEE monitoring lowers patients' all-cause $30$-day mortality rate.

\end{abstract}

\begin{keyword}
\kwd{Approximation algorithm}
\kwd{continuous dose}
\kwd{dose-response relationship}
\kwd{edge cover}
\kwd{full match}
\kwd{subclassification}
\end{keyword}

\end{frontmatter}

\section{Introduction}
\subsection{Application: The effect of TEE monitoring during CABG surgery}
Transesophageal echocardiography (henceforth TEE) is an ultrasound-based, cardiac imaging modality often used in cardiac surgeries to monitor patients' hemodynamics. TEE may potentially improve post-surgery clinical outcomes by facilitating intraoperative surgery decision making and managing complications related to cardiopulmonary bypass (\citealp{hahn2013guidelines,nishimura20172017}); indeed, \citet{mackay2020transesophageal} found perioperative TEE use was associated with lower $30$-day all-cause mortality among patients undergoing open cardiac valve repair or replacement surgery. Coronary artery bypass graft (henceforth CABG) surgery is the most widely performed surgery in the United States (\citealp{Database_2016}). Compared to open valve surgery, evidence supporting the use of TEE during isolated CABG surgery is more equivocal: TEE monitoring is classified by American Heart Association/American College of Cardiology (AHA/ACC) guidelines as a Class IIb recommendation, meaning its ``usefulness/efficacy is less well established by evidence/opinion" (\citealp{hillis20112011}). \citet{MacKay2020_protocol,mackay2021association} proposed to study TEE's effect on clinical outcomes using providers' preference for TEE as an instrumental variable (IV). 

One challenge in the study design is that the IV-defined exposure, providers' preference in this case, is continuous. A straightforward strategy to deal with a continuous exposure is to dichotomize it according to some pre-specified dichotomization scheme. Despite its simplicity and popularity, this practice suffers from at least two major drawbacks. First, defining the potential outcome under a dichotomized exposure potentially violates the stable unit treatment value assumption (SUTVA) (\citealp{rubin1980randomization, rubin1986statistics}). Let $\widetilde{Z}$ denote the continuous exposure and $Z = \mathbbm{1}\{\widetilde{Z} > c^\ast\}$ the associated dichotomized version, e.g., $c^\ast$ being the median. The potential outcome under $Z = 1$ is well-defined only when the potential outcome remains unchanged for all exposure doses $\widetilde{Z}$ exceeding the pre-specified threshold $c^\ast$. This is at best an approximation to the complicated reality in most circumstances. Moreover, dichotomizing the continuous exposure inevitably censors the rich information contained in the original dose and prevents researchers from investigating a dose-response relationship. Therefore, we would prefer a study design that preserves the continuous exposure dose (\citealt{lopez2017estimation}).

In their original study protocol, \citet{MacKay2020_protocol} embed observational data from Centers for Medicare and Medicaid Services (CMS) into a paired cluster-randomized encouragement experiment. \citet{MacKay2020_protocol} matched hospitals with similar patient population and hospital-level characteristics but distinct preference for TEE using a design technique called \emph{optimal non-bipartite pair match} (\citealp{lu2001matching, lu2011optimal, baiocchi2010building, baiocchi2012near}). A non-bipartite matching algorithm is distinct from bipartite matching algorithms suited only for statistical matching and subclassification with a binary treatment (\citealp{rosenbaum1989optimal, rosenbaum1991characterization, stuart2010matching}) and matching algorithms based on generalized propensity score (\citealp{yang2016propensity, lopez2017estimation, wu2018matching}); see Supplementary Material A for a detailed literature review. Figure \ref{fig: bipartite vs non-bipartite} gives a graphical representation of pair matching in a bipartite and a non-bipartite setting. With a binary treatment, there are well-defined treated and control groups, and there is little ambiguity in the ultimate goal of statistical matching: the matching algorithm aims to ``construct" or ``design" a matched control group (or comparison group) that resembles the treated group in baseline covariates. On the other hand, with a continuous exposure, there are no pre-defined treated and control groups: in principle, any unit can be matched to any other unit similar in covariates. This structural difference between bipartite and non-bipartite settings makes it more challenging to characterize optimal subclassification and design efficient algorithms in the non-bipartite context.

\begin{figure}[ht]
   \begin{tikzpicture}
\draw[help lines, color=gray!30, dashed] (0,1) grid (4.9,4.9);
\draw[->,ultra thick] (0,1)--(5,1) node[right]{};
\draw[->,ultra thick] (0,1)--(0,5) node[above]{\large Exposure};


    \filldraw [black] (0.7,4) circle (2pt);
    \filldraw [black] (2,4) circle (2pt);
    \filldraw [black] (4.2,4) circle (2pt);

    \filldraw [black] (0.25,2) circle (2pt);
    \filldraw [black] (1,2) circle (2pt);
    \filldraw [black] (2.2,2) circle (2pt);
    \filldraw [black] (1.2,2) circle (2pt);
    \filldraw [black] (2.7,2) circle (2pt);
    \filldraw [black] (3.7,2) circle (2pt);

    \draw [ultra thick] (0.7, 4) -- (1, 2); 
    \draw [ultra thick] (2, 4) -- (2.2, 2); 
    \draw [ultra thick] (4.2, 4) -- (3.7, 2); 

    \node [text width = 0.5 cm, text centered] at (-1.5, 4) {\large Treated};
    \node [text width = 0.5 cm, text centered] at (-1.5, 2) {\large Control};
    \node [text width = 0.5 cm, text centered] at (2.5, 0.5) {\Large X};

    \draw[help lines, color=gray!30, dashed] (7,1) grid (11.9,4.9);
    \draw[->,ultra thick] (7,1)--(12,1) node[right]{};
    \draw[->,ultra thick] (7,1)--(7,5) node[above]{\large Exposure};

    \filldraw [black] (7.5,4.5) circle (2pt);
    \filldraw [black] (8.7,3.8) circle (2pt);

    \filldraw [black] (7.7,1.5) circle (2pt);
    \filldraw [black] (8.2,2.5) circle (2pt);
    
    \filldraw [black] (10.5,1.2) circle (2pt);
    \filldraw [black] (11.2,1.7) circle (2pt);
    
    \filldraw [black] (11.5,4) circle (2pt);
    \filldraw [black] (9.7,2.3) circle (2pt);

    \draw [ultra thick] (7.5, 4.5) -- (8.7, 3.8); 
    \draw [ultra thick] (7.7, 1.5) -- (8.2, 2.5); 
    \draw [ultra thick] (10.5, 1.2) -- (11.2, 1.7); 
    \draw [ultra thick] (11.5, 4) -- (9.7, 2.3); 
    
    \node [text width = 0.5 cm, text centered] at (6, 3) {\large Dose};
    \node [text width = 0.5 cm, text centered] at (9.5, 0.5) {\Large X};

\end{tikzpicture}
 \caption{\small Left panel: a pair match in a bipartite setting with a binary treatment. Right panel: a pair match in a non-bipartite setting with a many-level or continuous treatment dose.}
    \label{fig: bipartite vs non-bipartite}
\end{figure}
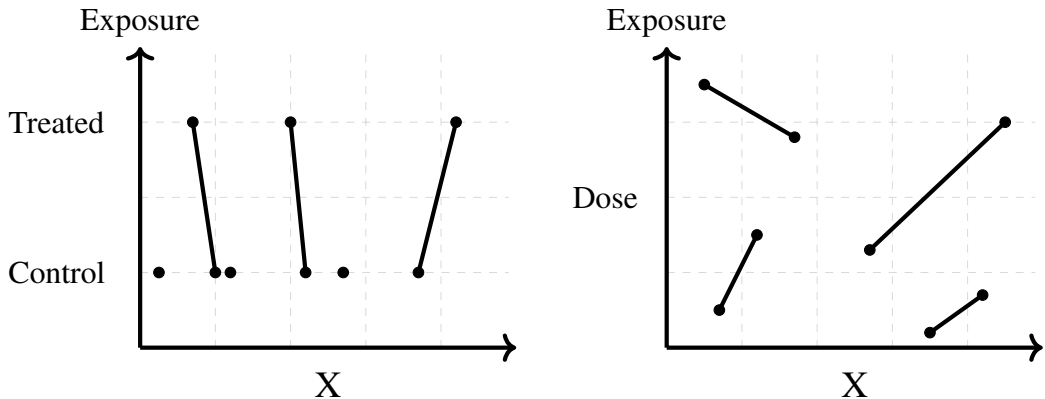

\subsection{Pair matching is not optimal}
Full matching is a more flexible subclassification scheme that divides units into non-overlapping matched sets (or subclasses) of size at least, but not necessarily, equal to two. With a binary treatment, \citet{rosenbaum1991characterization} found:
\begin{quote}
    [T]here may be no pair matching and no matching with multiple controls that is an optimal subclassification. ... [A] best pair match may be arbitrarily poor compared with the optimal full matching.
\end{quote}
These statements remain true in the non-bipartite setting with a continuous dose. To see this, it suffices to consider the following simple example with $6$ units $\{a, b, c, d, e, f\}$ and the associated distance matrix
\[\textbf{M} ~~=~~~
\begin{blockarray}{ccccccc}
& a & b & c & d & e & f\\
\begin{block}{c(cccccc)}
  a & 0 & \epsilon & \epsilon & \omega & \omega & \omega \\
  b & \epsilon & 0 & \epsilon & \omega & \omega & \omega \\
  c & \epsilon & \epsilon & 0 & \omega & \omega & \omega \\
  d & \omega & \omega & \omega & 0 & \epsilon & \epsilon  \\
  e & \omega & \omega & \omega & \epsilon & 0 &\epsilon  \\
  f & \omega & \omega & \omega & \epsilon & \epsilon &0 \\
\end{block}
\end{blockarray}
 \]
with $\epsilon \ll \omega$. The $ij^{\text{th}}$ entry of \textbf{M} represents a measure of distance, e.g., the Mahalanobis distance of observed covariates, between unit $i$ and $j$. An optimal pair match produces the following three matched pairs: \[
\boldsymbol\Pi_{\text{pair}} = \left\{\{a, c\}, \{b, d\}, \{e, f\} \right\}.
\]
On the other hand, consider the following full match:
\[
\boldsymbol\Pi_{\text{full}} = \left\{\{a, b, c\}, \{d, e, f\}\right\}. 
\]
It is evident that $\boldsymbol\Pi_{\text{full}}$ achieves a better matched-sets homogeneity, which we will carefully define later, compared to $\boldsymbol\Pi_{\text{pair}}$ when $\epsilon \ll \omega$; see Figure \ref{fig: pair match vs full match} for a transparent graphical representation. Moreover, since $\omega$ can be arbitrarily larger than $\epsilon$, $\boldsymbol\Pi_{\text{full}}$ can be arbitrarily better than $\boldsymbol\Pi_{\text{pair}}$ according to any reasonable homogeneity measure. In the most extreme case where $\omega = \infty$, there exists no admissible pair match exhausting six units; however, there exists a feasible full match, and a good one when $\epsilon$ is small.

\begin{figure}[h]
   \centering
     \subfloat{\begin{tikzpicture}[x = 1cm, y = 0.5cm]
   \foreach \Point in {(1, 1), (2,3), (3, 1), (4.5, 1), (5.5, 3), (6.5, 1)}{
    \node at \Point {\textbullet};
    }
    
    \draw [ultra thick] (2, 3) -- (1, 1); 
    \draw [ultra thick] (2, 3) -- (3, 1); 
    \draw [ultra thick] (5.5, 3) -- (4.5, 1);
    \draw [ultra thick] (5.5, 3) -- (6.5, 1);
    \node [text width = 0.5 cm] at (1.5, 2.5) {$\epsilon$};
    \node [text width = 0.5 cm] at (2.85, 2.5) {$\epsilon$};
    \node [text width = 0.5 cm] at (5, 2.5) {$\epsilon$};
    \node [text width = 0.5 cm] at (6.35, 2.5) {$\epsilon$};
    \node [text width = 0.5 cm] at (1.15, 0.5) {$a$};
    \node [text width = 0.5 cm] at (3.15, 0.5) {$b$};
    \node [text width = 0.5 cm] at (2.15, 3.6) {$c$};
    \node [text width = 0.5 cm] at (4.65, 0.5) {$d$};
    \node [text width = 0.5 cm] at (6.65, 0.5) {$e$};
    \node [text width = 0.5 cm] at (5.65, 3.6) {$f$};
    \end{tikzpicture}}\hspace{1.5cm}
  \subfloat{\begin{tikzpicture}[x = 1cm, y = 0.5cm]
   \foreach \Point in {(1, 1), (2,3), (3, 1), (4.5, 1), (5.5, 3), (6.5, 1)}{
    \node at \Point {\textbullet};
    }
    
    \draw [ultra thick] (2, 3) -- (1, 1); 
    \draw [ultra thick] (5.5, 3) -- (6.5, 1);
    \draw [ultra thick] (3, 1) -- (4.5, 1);
    \node [text width = 0.5 cm] at (1.5, 2.5) {$\epsilon$};
    \node [text width = 0.5 cm] at (6.35, 2.5) {$\epsilon$};
    \node [text width = 0.5 cm] at (3.85, 1.5) {$\omega$};
    \node [text width = 0.5 cm] at (1.15, 0.5) {$a$};
    \node [text width = 0.5 cm] at (3.15, 0.5) {$b$};
    \node [text width = 0.5 cm] at (2.15, 3.6) {$c$};
    \node [text width = 0.5 cm] at (4.65, 0.5) {$d$};
    \node [text width = 0.5 cm] at (6.65, 0.5) {$e$};
    \node [text width = 0.5 cm] at (5.65, 3.6) {$f$};
    \end{tikzpicture}}
    \caption{\small Left panel: a full match $\boldsymbol\Pi_{\text{full}}$ dividing six units into two matched sets of size three each. Right panel: a pair match $\boldsymbol\Pi_{\text{pair}}$ dividing six units into three matched pairs. When $\omega \gg \epsilon$, $\boldsymbol\Pi_{\text{full}}$ delivers matched sets with superior homogeneity than $\boldsymbol\Pi_{\text{pair}}$.}
    \label{fig: pair match vs full match}
\end{figure}
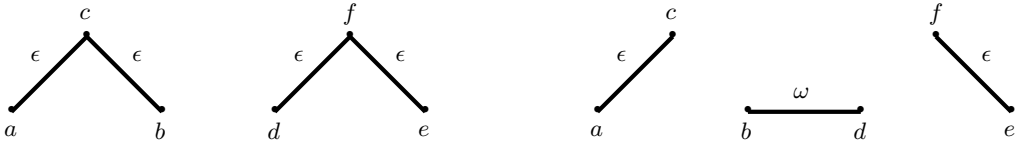

Another concern with pair matching is that it is often not flexible enough to deliver a subclassification that is simultaneously homogeneous in units' observed covariates and reasonably heterogeneous in units' exposure doses; in fact, units are often removed in the design stage (\citealp{baiocchi2010building, baiocchi2012near}) to achieve both goals. For instance, \citet{MacKay2020_protocol} removed $20\%$ of all hospitals in their matched-pair design using a design device known as ``sinks" (\citealp{baiocchi2010building}). Ideally, we would prefer a design that utilizes all units while maintaining homogeneity in covariates and good separation in exposure doses. Lastly, one minor issue with pair matching is that, when the number of units is odd, say $N = 5$, the design necessarily discards one unit to produce two matched pairs.

These limitations of a non-bipartite pair match design and the abundance of observational studies with a continuous or many-level exposure motivate us to study optimal subclassfication in the non-bipartite setting.

\subsection{Outline: a characterization of optimal non-bipartite subclassification, an algorithm, two simulation studies, and an application}
Two subclassification homogeneity measures and optimal subclassification with respect to each measure are defined in Section \ref{sec: two homo measures}. Section \ref{sec: relationship between two solutions} proves a useful relationship between the two homogeneity measures; this relationship suggests that any algorithm that finds a subclassification with respect to one homogeneity measure is automatically a 2-approximation algorithm for the other measure. An efficient, polynomial-time algorithm that finds an optimal subclassification with respect to one homogeneity measure and suitable weights is presented in Section \ref{sec: algorithm exist for Pi^astast}. Section \ref{sec: other design aspects} discusses how to further incorporate the treatment dose in the design stage, and how to probe the middle ground between an optimal pair match and an optimal subclassification. Two simulation studies, one examining how combining the proposed subclassification scheme with regression adjustment helps reduce bias of the regression estimator, and the other systematically comparing the proposed subclassification method to optimal pair matching, are presented in Section \ref{sec: nbp matching as preprocessing} and \ref{sec: nbp matching full vs pair}, respectively. We leverage the proposed novel design and conduct randomized-based inference to study the effect of TEE monitoring during CABG surgery on patients' 30-day all-cause mortality in Section \ref{sec: real data study design}. We conclude with a brief discussion in Section \ref{sec: discussion}.

\section{Two measures of subclassification homogeneity}
\label{sec: two homo measures}
Let $\mathcal{N} = \{1, 2, \dots, N\}$ denote a set of $N$ units and $2^{\mathcal{N}}$ its power set, i.e., the collection of all subsets of $\mathcal{N}$. Let $\boldsymbol\Pi = \{\Pi_1, \Pi_2, \dots, \Pi_K\}$ denote a subclassification (or partition) of these $N$ units into $K$ non-overlapping subclasses such that each subclass $\Pi_k$ consists of $|\Pi_k| \geq 2$ units, $\sum_{1\leq k\leq K} |\Pi_k| = N$, and their union $\bigcup_{1\leq k \leq K} \Pi_K$ recovers these $N$ units. The number of subclasses $K$ is not fixed a priori. Finally, let $\mathcal{A}$ be the set of all possible subclassifications. We first develop two notions of subclass homogeneity.
\begin{definition}[Average pairwise homogeneity]\rm
\label{def: nu homogeneity}
Let $\delta(i,j)$ denote a distance between unit $i$ and $j$. \emph{Average pairwise homogeneity} of subclass $\Pi_k$, denoted as $\nu(\Pi_k)$, refers to the following quantity:
\begin{equation}
    \label{eqn: avg dist among pi_k}
    \nu(\Pi_k) = \frac{1}{|\Pi_k| \times (|\Pi_k| - 1)} \sum_{i,j \in \Pi_k, i \neq j}\delta(i,j).
\end{equation}
\end{definition}
According to Definition \ref{def: nu homogeneity}, $\nu(\Pi_k)$ is the average distance of all pairwise comparisons among units in the subclass $\Pi_k$. For instance, in the TEE/CABG application with a hospital-preference-based instrumental variable exposure, $\delta(i, j)$ could measure some distance between IV-outcome confounders (e.g., patient composition and hospital characteristics) of hospital $i$ and $j$, and $\nu(\Pi_k)$ would then measure the homogeneity in these IV-outcome confounders of hospitals in the same subclass $\Pi_k$.

Associated with a subclassification $\boldsymbol\Pi$ and $\nu(\Pi_k)$ is the following homogeneity measure of $\boldsymbol\Pi$:
\begin{equation}
    \label{eqn: weighted avg dist of Pi}
    \nu(\boldsymbol\Pi; \mathcal{W}) = \sum_{1\leq k\leq K } w(\Pi_k) \times \nu(\Pi_k),
\end{equation}
where $\mathcal{W}$ is a shorthand for a pre-specified weighting scheme $w(\cdot): 2^{\mathcal{N}} \mapsto \mathbb{R}^{\geq 0}$ that maps each possible subclass $\Pi_k \in 2^{\mathcal{N}}$ to a non-negative real number.


\begin{definition}\rm
\label{def: optimal wrt nu}
A subclassification $\boldsymbol\Pi_{\text{opt}}^{\nu}$ is said to be optimal with respect to the homogeneity measure $\nu(\boldsymbol\Pi; \mathcal{W})$ if \[
\boldsymbol\Pi_{\text{opt}}^{\nu} = \argmin_{\boldsymbol\Pi \in \mathcal{A}}~\nu(\boldsymbol\Pi; \mathcal{W}).
\]
\end{definition}

In full matching with a binary exposure, each subclass consists of either one treated unit and multiple control units or one control unit and multiple treated units, and subclass homogeneity is measured by averaging over all pairwise comparisons between the treated unit and each control unit (or the control unit and each treated unit); see \citet[Section 3]{rosenbaum1991characterization}. This structure motivates a second sensible homogeneity measure as follows.

\begin{definition}[Star homogeneity]\rm
\label{def: star homo of subclass}
Let $i_k^\ast \in \Pi_k$ be a reference unit in subclass $\Pi_k$. \emph{Star homogeneity} refers to the following quantity:
\begin{equation}
    \label{eqn: avg star dist among pi_k}
    \nu_{\text{star}}(\Pi_k; i_k^\ast) = \frac{1}{|\Pi_k| - 1} \sum_{j \in \Pi_k, j \neq i_k^\ast} \delta(i_k^\ast, j).
\end{equation}
\end{definition}
Unlike $\nu(\Pi_k)$ which averages over all pairwise comparisons, $\nu_{\text{star}}(\Pi_k; i_k^\ast)$ first picks a reference unit (e.g., the unit with the highest or lowest dose in each subclass; see Section \ref{subsec: incorporate dose} for how to enforce this choice), compares all other units to this reference unit, and then averages over such comparisons. In the TEE/CABG application, if the hospital with the highest preference for TEE in the subclass $\Pi_k$ is chosen as the reference unit $i^\ast_k$, then $\nu_{\text{star}}(\Pi_k; i_k^\ast)$ measures how close in patient composition and hospital characteristics the other hospitals in the same subclass are compared to this highest-preference hospital.

Associated with a subclassification $\boldsymbol\Pi$, the star homogeneity, a vector of reference units $\mathbf{i}^\ast = (i_1^\ast, \dots, i_K^\ast)$, and a weighting scheme $\mathcal{W}$ is a second homogeneity measure of $\boldsymbol\Pi$:
\begin{equation}
    \label{eqn: weighted star dist of Pi}
    \nu_{\text{star}}(\boldsymbol\Pi; \mathbf{i}^\ast, \mathcal{W}) = \sum_{1\leq k\leq K} w(\Pi_k) \times \nu_{\text{star}}(\Pi_k; i_k^\ast).
\end{equation}

\begin{definition}\rm
\label{def: optimal wrt nu star}
A subclassification $\boldsymbol{\Pi}_{\text{opt}}^{\nu_{\text{star}}} = \{\Pi_{\text{opt}, 1}^{\nu_{\text{star}}}, \dots, \Pi_{\text{opt}, K}^{\nu_{\text{star}}}\}$ with reference units \\ $\mathbf{i}_{\text{opt}}^{\ast} = (i_{\text{opt}, 1}^{\ast}, \dots, i_{\text{opt}, K}^{\ast})$, $i^{\ast}_{\text{opt}, k} \in \Pi_{\text{opt}, k}^{\nu_{\text{star}}}$, is said to be optimal with respect to the homogeneity measure $\nu_{\text{star}}(\boldsymbol\Pi; \mathbf{i}^\ast, \mathcal{W})$ if 
\[
(\boldsymbol\Pi_{\text{opt}}^{\nu_{\text{star}}}, \mathbf{i}_{\text{opt}}^{\ast})  = \argmin_{\boldsymbol\Pi \in \mathcal{A}} \argmin_{i^\ast_k \in \Pi_k, 1 \leq k \leq K} \nu_{\text{star}}(\boldsymbol\Pi; \mathbf{i}^\ast, \mathcal{W}),
\]
where minimization is taken over all subclassifications \emph{and} all possible reference units in each subclass. 
\end{definition}

\begin{remark}\rm
In the special case of pair matching, it is easy to check that for all $\Pi_k \in \boldsymbol\Pi$ and for all $i_k^\ast \in \Pi_k$, $\nu_{\text{star}}(\Pi_k; i^\ast_k) = \nu(\Pi_k)$, i.e., two measures of subclass homogeneity reduce to the same measure. Moreover, under a weighting scheme that assigns the same weight to all matched pairs, we have $\boldsymbol{\Pi}_{\text{opt}}^{\nu_{\text{star}}} = \boldsymbol{\Pi}_{\text{opt}}^{\nu}$ and this optimal solution is precisely returned by an optimal non-bipartite pair matching algorithm (\citealp{lu2001matching,lu2011optimal}). 
\end{remark}

\section{Relationship between two optimal solutions}
\label{sec: relationship between two solutions}
A subclassification $\boldsymbol\Pi_{\text{opt}}^{\nu_{\text{star}}}$ is optimal with respect to the homogeneity measure $\nu_{\text{star}}(\cdot)$ and a weighting scheme $\mathcal{W}$. A natural question arises as to what can be said about its homogeneity under the other measure $\nu(\cdot)$, and how $\nu(\boldsymbol\Pi_{\text{opt}}^{\nu_{\text{star}}}; \mathcal{W})$ compares to the optimal $\nu(\cdot)$ homogeneity under the same weights. This section establishes a revealing relationship between $\nu(\boldsymbol\Pi_{\text{opt}}^{\nu_{\text{star}}}; \mathcal{W})$ and $\nu(\boldsymbol\Pi_{\text{opt}}^{\nu}; \mathcal{W})$.

\begin{lemma}\rm
\label{lemma: bound in nu star and nu}
Let $\Pi_k$ be a subclass with size $|\Pi_k|$ and $\delta(i, j)$ a distance that satisfies the triangle inequality. We have 
\[
\min_{i^\ast_k \in \Pi_k} \nu_{\text{star}}(\Pi_k; i^\ast_k) \leq \nu(\Pi_k),
\]
and 
\[
\nu(\Pi_k) \leq \frac{2(|\Pi_k| - 1)}{|\Pi_k|} \cdot \nu_{\text{star}}(\Pi_k; i^\ast_k),~\forall i_k^\ast \in \Pi_k.
\]
In particular, when $|\Pi_k| = 2$, we have $\nu_{\text{star}}(\Pi_k; i^\ast_k) = \nu(\Pi_k)$, $\forall i^\ast_k \in \Pi_k$.
\end{lemma}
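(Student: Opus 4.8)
The plan is to treat the two inequalities separately, writing $n = |\Pi_k|$ throughout for brevity.

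For the first inequality, the key observation I would make is that $\nu(\Pi_k)$ is itself an arithmetic mean of star homogeneities. Grouping the terms of the double sum in \eqref{eqn: avg dist among pi_k} according to the first index $i$ gives
\[
\nu(\Pi_k) = \frac{1}{n}\sum_{i \in \Pi_k}\frac{1}{n-1}\sum_{j \in \Pi_k,\, j \neq i}\delta(i,j) = \frac{1}{n}\sum_{i \in \Pi_k}\nu_{\text{star}}(\Pi_k; i).
\]
Since the mean of a finite collection of numbers is at least its minimum, $\min_{i_k^\ast \in \Pi_k}\nu_{\text{star}}(\Pi_k; i_k^\ast) \leq \nu(\Pi_k)$ is immediate. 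I would note in passing that this half uses neither symmetry of $\delta$ nor the triangle inequality.

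For the second inequality, I would fix an arbitrary reference unit $i_k^\ast \in \Pi_k$ and apply the triangle inequality to every ordered pair: $\delta(i,j) \leq \delta(i, i_k^\ast) + \delta(i_k^\ast, j)$ for all $i \neq j$ in $\Pi_k$ (trivially valid, with equality, when $i$ or $j$ equals $i_k^\ast$). Summing over all $n(n-1)$ ordered pairs, each of the two resulting sums collapses: for fixed $i$ the term $\delta(i, i_k^\ast)$ is constant in $j$, so $\sum_{i \neq j}\delta(i, i_k^\ast) = (n-1)\sum_{i \in \Pi_k}\delta(i, i_k^\ast) = (n-1)^2\,\nu_{\text{star}}(\Pi_k; i_k^\ast)$ after using $\delta(i_k^\ast, i_k^\ast) = 0$ and symmetry, and likewise $\sum_{i \neq j}\delta(i_k^\ast, j) = (n-1)^2\,\nu_{\text{star}}(\Pi_k; i_k^\ast)$. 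Hence $\sum_{i \neq j}\delta(i,j) \leq 2(n-1)^2\,\nu_{\text{star}}(\Pi_k; i_k^\ast)$, and dividing by $n(n-1)$ gives exactly $\nu(\Pi_k) \leq \frac{2(n-1)}{n}\,\nu_{\text{star}}(\Pi_k; i_k^\ast)$. For the last claim, when $n = 2$ the constant $\frac{2(n-1)}{n}$ equals $1$, so the two displayed inequalities sandwich $\nu(\Pi_k)$ between $\nu_{\text{star}}(\Pi_k; i_k^\ast)$ and itself; equivalently, a two-element subclass has a single pairwise distance, so both quantities equal $\delta(i,j)$ for either choice of reference.

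The argument is essentially bookkeeping, so I do not anticipate a genuine obstacle; the only points requiring care are the correct counting of ordered versus unordered pairs (so that the collapsing sums yield precisely the constant $\frac{2(n-1)}{n}$ rather than something looser) and the explicit use of symmetry $\delta(i,j) = \delta(j,i)$ when rewriting $\sum_{i \in \Pi_k}\delta(i, i_k^\ast)$ in terms of $\nu_{\text{star}}(\Pi_k; i_k^\ast)$.
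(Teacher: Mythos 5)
Your proof is correct and follows essentially the same route as the paper's: the first inequality via the identity $\nu(\Pi_k) = \frac{1}{|\Pi_k|}\sum_{i \in \Pi_k}\nu_{\text{star}}(\Pi_k; i)$ (mean dominates minimum), and the second via the triangle inequality through the reference unit $i_k^\ast$, yielding the same bound $2(|\Pi_k|-1)^2\,\nu_{\text{star}}(\Pi_k; i_k^\ast)$ on the sum of all ordered pairwise distances. The only cosmetic difference is that the paper splits the ordered pairs into those touching $i_k^\ast$ (handled exactly) and those not (handled by the triangle inequality), whereas you apply the triangle inequality uniformly, which is valid since it holds with equality on the pairs involving $i_k^\ast$.
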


\begin{proof}
All proofs in this article are in Supplementary Material B.
\end{proof}

Let $\nu^\ast_{\text{star}}(\Pi_k) := \min_{i^\ast_k \in \Pi_k} \nu_{\text{star}}(\Pi_k; i^\ast_k)$ be the minimum $\nu_{\text{star}}(\cdot)$ homogeneity of a subclass $\Pi_k$ among all reference units $i_k^\ast \in \Pi_k$. Define 
\[
\nu^\ast_{\text{star}}(\Pi; \mathcal{W}) = \sum_{1 \leq k \leq K} w(\Pi_k) \cdot \nu^\ast_{\text{star}}(\Pi_k).
\] 

\begin{corollary}\rm
\label{cor: bound nu between nu_start and 2 nu_star}
For any subclass $\Pi_k$, we have 
\[
\nu^\ast_{\text{star}}(\Pi_k) \leq \nu(\Pi_k) \leq \frac{2(|\Pi_k| - 1)}{|\Pi_k|} \cdot \nu^\ast_{\text{star}}(\Pi_k).
\] Moreover, for any subclassification $\boldsymbol\Pi$ and weighting scheme $\mathcal{W}$, we have
\[
\nu^\ast_{\text{star}}(\boldsymbol\Pi; \mathcal{W}) \leq \nu(\boldsymbol\Pi; \mathcal{W}) < 2 \nu^\ast_{\text{star}}(\boldsymbol\Pi; \mathcal{W}).
\]
\end{corollary}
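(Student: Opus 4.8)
The plan is to obtain Corollary~\ref{cor: bound nu between nu_start and 2 nu_star} directly from the Lemma, in two moves: first specialize the Lemma's two inequalities to the reference unit that minimizes $\nu_{\text{star}}(\Pi_k;\cdot)$, and then take a nonnegative weighted sum over the subclasses.

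For the per-subclass chain: since $\Pi_k$ is a finite set, the minimum defining $\nu^\ast_{\text{star}}(\Pi_k)=\min_{i_k^\ast\in\Pi_k}\nu_{\text{star}}(\Pi_k;i_k^\ast)$ is attained at some $\hat{i}_k\in\Pi_k$. The left inequality $\nu^\ast_{\text{star}}(\Pi_k)\le\nu(\Pi_k)$ is then precisely the first inequality of the Lemma. For the right inequality, apply the second inequality of the Lemma at $i_k^\ast=\hat{i}_k$: it holds for every reference unit, hence in particular $\nu(\Pi_k)\le\frac{2(|\Pi_k|-1)}{|\Pi_k|}\nu_{\text{star}}(\Pi_k;\hat{i}_k)=\frac{2(|\Pi_k|-1)}{|\Pi_k|}\nu^\ast_{\text{star}}(\Pi_k)$. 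This gives the first display; when $|\Pi_k|=2$ the factor equals $1$ and both inequalities collapse to the equality already recorded in the Lemma.

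For the aggregate bounds, multiply the per-subclass chain by $w(\Pi_k)\ge 0$ and sum over $1\le k\le|\boldsymbol\Pi|$. Using $\nu(\boldsymbol\Pi;\mathcal{W})=\sum_k w(\Pi_k)\nu(\Pi_k)$ and $\nu^\ast_{\text{star}}(\boldsymbol\Pi;\mathcal{W})=\sum_k w(\Pi_k)\nu^\ast_{\text{star}}(\Pi_k)$, the lower bound $\nu^\ast_{\text{star}}(\boldsymbol\Pi;\mathcal{W})\le\nu(\boldsymbol\Pi;\mathcal{W})$ is immediate. For the upper bound, note $\frac{2(|\Pi_k|-1)}{|\Pi_k|}=2-\frac{2}{|\Pi_k|}<2$ for every $k$ since each subclass is finite with $|\Pi_k|\ge 2$, so $\nu(\boldsymbol\Pi;\mathcal{W})\le\sum_k w(\Pi_k)\frac{2(|\Pi_k|-1)}{|\Pi_k|}\nu^\ast_{\text{star}}(\Pi_k)<2\sum_k w(\Pi_k)\nu^\ast_{\text{star}}(\Pi_k)=2\nu^\ast_{\text{star}}(\boldsymbol\Pi;\mathcal{W})$.

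The one point needing a touch of care — and hence the only ``obstacle'' here — is strictness of that last step: the inequality $\sum_k w(\Pi_k)\frac{2(|\Pi_k|-1)}{|\Pi_k|}\nu^\ast_{\text{star}}(\Pi_k)<2\sum_k w(\Pi_k)\nu^\ast_{\text{star}}(\Pi_k)$ is strict exactly when some summand with $w(\Pi_k)>0$ has $\nu^\ast_{\text{star}}(\Pi_k)>0$; if every weighted term vanishes, both sides equal $0$. I would therefore either state the strict bound under the harmless proviso that $\nu^\ast_{\text{star}}(\boldsymbol\Pi;\mathcal{W})>0$ (i.e., the design has a weight-positive subclass with two non-identical units), or simply observe that the degenerate all-zeros case makes the claimed bound trivial. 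Everything else is routine bookkeeping with nonnegative weights.
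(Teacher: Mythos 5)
Your proof is correct and takes essentially the same route as the paper's: the per-subclass chain is Lemma~1 applied at the minimizing reference unit, and the aggregate bounds follow by multiplying by the nonnegative weights, summing, and using $\frac{2(|\Pi_k|-1)}{|\Pi_k|}<2$ for the final strict inequality. Your caveat about strictness when every weighted term $w(\Pi_k)\,\nu^\ast_{\text{star}}(\Pi_k)$ vanishes is a fair observation that the paper's own proof silently glosses over.
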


Corollary \ref{cor: bound nu between nu_start and 2 nu_star} establishes a link between two homogeneity measures $\nu(\boldsymbol\Pi; \mathcal{W})$ and $\nu_{\text{star}}(\boldsymbol\Pi; \mathcal{W})$: any subclassification $\boldsymbol\Pi$ has its $\nu(\boldsymbol\Pi; \mathcal{W})$ sandwiched between $\nu^\ast_{\text{star}}(\boldsymbol\Pi; \mathcal{W})$ and $2\nu^\ast_{\text{star}}(\boldsymbol\Pi; \mathcal{W})$. Proposition \ref{prop: sandwich identity} is an important consequence of Corollary \ref{cor: bound nu between nu_start and 2 nu_star}.

\begin{proposition}\rm
\label{prop: sandwich identity}
Let $\boldsymbol\Pi_{\text{opt}}^{\nu_{\text{star}}}$ be an optimal partition with respect to the homogeneity measure $\nu_{\text{star}}(\cdot)$ and weighting scheme $\mathcal{W}$, and $\boldsymbol\Pi_{\text{opt}}^{\nu}$ optimal with respect to $\nu(\cdot)$ and the same weighting scheme. We have 
\[
\nu(\boldsymbol\Pi_{\text{opt}}^{\nu}; \mathcal{W}) \leq \nu(\boldsymbol\Pi_{\text{opt}}^{\nu_{\text{star}}}; \mathcal{W}) 
< 2\nu(\boldsymbol\Pi_{\text{opt}}^{\nu}; \mathcal{W}).
\]
In words, $\boldsymbol\Pi_{\text{opt}}^{\nu_{\text{star}}}$ is optimal under the homogeneity measure $\nu_{\text{star}}(\cdot)$, and its homogeneity under the other measure $\nu(\cdot)$ is no worse than the optimal homogeneity under $\nu(\cdot)$ by a factor of $2$.
\end{proposition}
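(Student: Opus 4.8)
The plan is to derive the claim purely by chaining the two-sided bound of Corollary \ref{cor: bound nu between nu_start and 2 nu_star} with the two optimality statements; essentially no new estimate is needed, only careful bookkeeping of which partition each quantity is evaluated at. The left inequality is immediate: by Definition \ref{def: optimal wrt nu}, $\boldsymbol\Pi_{\text{opt}}^{\nu}$ minimizes $\nu(\cdot;\mathcal{W})$ over all of $\mathcal{A}$, and $\boldsymbol\Pi_{\text{opt}}^{\nu_{\text{star}}}\in\mathcal{A}$ is just one admissible competitor, so $\nu(\boldsymbol\Pi_{\text{opt}}^{\nu};\mathcal{W}) \leq \nu(\boldsymbol\Pi_{\text{opt}}^{\nu_{\text{star}}};\mathcal{W})$.

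For the right inequality I would argue in three steps. First, apply the upper half of Corollary \ref{cor: bound nu between nu_start and 2 nu_star} to the partition $\boldsymbol\Pi_{\text{opt}}^{\nu_{\text{star}}}$ to get $\nu(\boldsymbol\Pi_{\text{opt}}^{\nu_{\text{star}}};\mathcal{W}) < 2\,\nu^\ast_{\text{star}}(\boldsymbol\Pi_{\text{opt}}^{\nu_{\text{star}}};\mathcal{W})$. Second, note that for every partition $\boldsymbol\Pi$ one has $\nu^\ast_{\text{star}}(\boldsymbol\Pi;\mathcal{W}) = \min_{\mathbf{i}^\ast}\nu_{\text{star}}(\boldsymbol\Pi;\mathbf{i}^\ast,\mathcal{W})$, because in $\nu_{\text{star}}(\boldsymbol\Pi;\mathbf{i}^\ast,\mathcal{W}) = \sum_k w(\Pi_k)\,\nu_{\text{star}}(\Pi_k;i^\ast_k)$ the reference unit $i^\ast_k$ enters only the $k$-th summand, so the inner minimizations decouple and yield $\sum_k w(\Pi_k)\,\nu^\ast_{\text{star}}(\Pi_k) = \nu^\ast_{\text{star}}(\boldsymbol\Pi;\mathcal{W})$. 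Consequently the joint minimization in Definition \ref{def: optimal wrt nu star} says exactly that $\boldsymbol\Pi_{\text{opt}}^{\nu_{\text{star}}}$ minimizes $\nu^\ast_{\text{star}}(\cdot;\mathcal{W})$ over $\mathcal{A}$, whence $\nu^\ast_{\text{star}}(\boldsymbol\Pi_{\text{opt}}^{\nu_{\text{star}}};\mathcal{W}) \leq \nu^\ast_{\text{star}}(\boldsymbol\Pi_{\text{opt}}^{\nu};\mathcal{W})$. Third, apply the lower half of Corollary \ref{cor: bound nu between nu_start and 2 nu_star} to $\boldsymbol\Pi_{\text{opt}}^{\nu}$ to get $\nu^\ast_{\text{star}}(\boldsymbol\Pi_{\text{opt}}^{\nu};\mathcal{W}) \leq \nu(\boldsymbol\Pi_{\text{opt}}^{\nu};\mathcal{W})$. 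Concatenating the three displays gives $\nu(\boldsymbol\Pi_{\text{opt}}^{\nu_{\text{star}}};\mathcal{W}) < 2\,\nu(\boldsymbol\Pi_{\text{opt}}^{\nu};\mathcal{W})$, as required.

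The only point demanding any care — and it is the main, if mild, obstacle — is the second step: verifying that the double $\argmin$ over partitions and reference units collapses into a single minimization of $\nu^\ast_{\text{star}}(\cdot;\mathcal{W})$ over partitions, and in particular that the optimal $\boldsymbol\Pi_{\text{opt}}^{\nu_{\text{star}}}$ is genuinely a competitor against $\boldsymbol\Pi_{\text{opt}}^{\nu}$ at the level of $\nu^\ast_{\text{star}}$. The decoupling observation above handles this. The strictness of the upper bound is simply inherited from the strict inequality already built into Corollary \ref{cor: bound nu between nu_start and 2 nu_star} (which stems from $\tfrac{2(|\Pi_k|-1)}{|\Pi_k|}<2$ for all $|\Pi_k|\geq 2$), so no degenerate subclass size can spoil it.
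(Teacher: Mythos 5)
Your proposal is correct and follows essentially the same route as the paper's proof: both use the definition of $\boldsymbol\Pi_{\text{opt}}^{\nu}$ for the left inequality, and for the right inequality both apply the sandwich of Corollary \ref{cor: bound nu between nu_start and 2 nu_star} to the two optimal partitions together with the fact that $\boldsymbol\Pi_{\text{opt}}^{\nu_{\text{star}}}$ minimizes $\nu^\ast_{\text{star}}(\cdot;\mathcal{W})$ (your decoupling of the double $\argmin$ is exactly the observation the paper invokes implicitly). Your direct chaining $\nu(\boldsymbol\Pi_{\text{opt}}^{\nu_{\text{star}}};\mathcal{W}) < 2\nu^\ast_{\text{star}}(\boldsymbol\Pi_{\text{opt}}^{\nu_{\text{star}}};\mathcal{W}) \leq 2\nu^\ast_{\text{star}}(\boldsymbol\Pi_{\text{opt}}^{\nu};\mathcal{W}) \leq 2\nu(\boldsymbol\Pi_{\text{opt}}^{\nu};\mathcal{W})$ is just a cleaner arrangement of the paper's difference-based bookkeeping, not a different argument.
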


In the computer science and operations research literature, an approximation algorithm refers to an algorithm that finds an approximate solution to an optimization problem with a provable guarantee on the distance between the approximate solution and the optimal solution; see \citet{vazirani2013approximation} and \citet{williamson2011design} for general discussion. A $\rho$-approximation algorithm refers to an approximation algorithm that returns an approximate solution $x^\ast_{\text{approx}}$ whose objective function value $f^\ast_{\text{approx}}$ is no worse than that of the optimal solution $f_{\text{opt}}$ by a factor of $\rho$, i.e.,
\[
f_{\text{opt}} \leq f^\ast_{\text{approx}} \leq \rho\times f_{\text{opt}}.
\]

Corollary \ref{cor: approx alg} is an immediate consequence of Proposition \ref{prop: sandwich identity}.

\begin{corollary}\rm
\label{cor: approx alg}
Let $\boldsymbol\Pi_{\text{opt}}^{\nu}$ and $\boldsymbol\Pi_{\text{opt}}^{\nu_{\text{star}}}$ be defined as in Definition \ref{def: optimal wrt nu} and Definition \ref{def: optimal wrt nu star} with respect to the same weighting scheme $\mathcal{W}$. If $\textsf{ALG}$ is an algorithm for finding $\boldsymbol\Pi_{\text{opt}}^{\nu_{\text{star}}}$, then $\textsf{ALG}$ is also a 2-approximation algorithm for finding $\boldsymbol\Pi_{\text{opt}}^{\nu}$.
\end{corollary}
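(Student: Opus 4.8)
The plan is to unwind the definition of a 2-approximation algorithm and observe that Proposition \ref{prop: sandwich identity} supplies precisely the inequality that definition requires. First I would fix notation: let $\textsf{ALG}$ be an algorithm that, on any instance, returns an optimal subclassification $\boldsymbol\Pi_{\text{opt}}^{\nu_{\text{star}}}$ together with its optimal reference vector $\mathbf{i}_{\text{opt}}^\ast$ under the criterion $\nu_{\text{star}}(\boldsymbol\Pi; \mathbf{i}^\ast, \mathcal{W})$. Since the competing objective $\nu(\boldsymbol\Pi; \mathcal{W})$ depends only on the partition and not on the choice of reference units, the relevant ``approximate solution'' that $\textsf{ALG}$ produces for the $\nu$-problem is simply the partition $\boldsymbol\Pi_{\text{opt}}^{\nu_{\text{star}}}$, whose $\nu$-objective value is $\nu(\boldsymbol\Pi_{\text{opt}}^{\nu_{\text{star}}}; \mathcal{W})$.

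Next I would invoke the abstract definition recalled immediately before the corollary: $\textsf{ALG}$ is a $2$-approximation algorithm for $\min_{\boldsymbol\Pi \in \mathcal{A}} \nu(\boldsymbol\Pi; \mathcal{W})$ exactly when its output $\boldsymbol\Pi_{\text{opt}}^{\nu_{\text{star}}}$ satisfies
\[
\nu(\boldsymbol\Pi_{\text{opt}}^{\nu}; \mathcal{W}) \;\leq\; \nu(\boldsymbol\Pi_{\text{opt}}^{\nu_{\text{star}}}; \mathcal{W}) \;\leq\; 2\,\nu(\boldsymbol\Pi_{\text{opt}}^{\nu}; \mathcal{W}),
\]
where $\boldsymbol\Pi_{\text{opt}}^{\nu}$ is the true minimizer of $\nu(\cdot; \mathcal{W})$. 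But this is exactly the content of Proposition \ref{prop: sandwich identity}, which gives $\nu(\boldsymbol\Pi_{\text{opt}}^{\nu}; \mathcal{W}) \leq \nu(\boldsymbol\Pi_{\text{opt}}^{\nu_{\text{star}}}; \mathcal{W}) < 2\,\nu(\boldsymbol\Pi_{\text{opt}}^{\nu}; \mathcal{W})$: weakening the strict upper inequality to a non-strict one yields the displayed bound, and the lower inequality is the tautology that the minimizer of $\nu(\cdot; \mathcal{W})$ attains the smallest $\nu$-value among all subclassifications. Hence $\textsf{ALG}$, run as a black box and reporting only the partition it returns, is a $2$-approximation algorithm for finding $\boldsymbol\Pi_{\text{opt}}^{\nu}$.

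Because the corollary is a direct translation of Proposition \ref{prop: sandwich identity} into the vocabulary of approximation algorithms, I do not anticipate any genuine obstacle; the only items worth stating explicitly are bookkeeping. One is that $\nu(\cdot; \mathcal{W})$ is insensitive to the reference units, so discarding $\mathbf{i}_{\text{opt}}^\ast$ from the output of $\textsf{ALG}$ is harmless. The other is that if several partitions are simultaneously $\nu_{\text{star}}$-optimal and $\textsf{ALG}$ may return any one of them, the bound of Proposition \ref{prop: sandwich identity} applies verbatim to each such partition, so the approximation guarantee holds uniformly over all admissible outputs of $\textsf{ALG}$.
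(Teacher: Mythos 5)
Your proposal is correct and follows exactly the paper's route: the paper treats the corollary as an immediate consequence of Proposition \ref{prop: sandwich identity}, whose sandwich inequality is precisely the $2$-approximation guarantee once the definition is unwound. Your additional remarks (that $\nu(\cdot;\mathcal{W})$ ignores the reference units and that the bound applies to any $\nu_{\text{star}}$-optimal partition $\textsf{ALG}$ might return) are harmless bookkeeping consistent with the paper.
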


Corollary \ref{cor: approx alg} is important and useful because efficient, polynomial-time algorithms exist for finding $\boldsymbol{\Pi}_{\text{opt}}^{\nu_{\text{star}}}$ with respect to suitable weights, as we demonstrate in the next section.

\section{An efficient, polynomial-time algorithm}
\label{sec: algorithm exist for Pi^astast}

\subsection{Graph, edge cover, and suitable weights}
We introduce some useful terminologies from the graph theory to carry forward the discussion. Let $G = (V, E)$ denote a graph with vertex set $V$ and edge set $E$. We use $e = (i, j),~i,j\in V$, to denote an edge connecting vertex $i$ and $j$, in which case we say vertex $i$ (and similarly $j$) is incident to edge $e = (i, j)$. A subset of edges $\mathcal{S} \subseteq E$ is said to form a star if $\mathcal{S} = \{(i, j_1), (i, j_2), \dots, (i, j_k)\}$; $i$ is often referred to as the internal node or center of the star, and $\{j_1, j_2, \dots, j_k\}$ leaves. 

An \emph{edge cover} of graph $G$ is a subset of edges $F \subseteq E$ such that all vertices in $G$ are incident to at least one edge in $F$. Let $\mathcal{F}$ denote the class of all edge covers of graph $G$, and each edge $e = (i, j)$ be associated with a nonnegative cost $c(i, j)$. The cost of an edge cover $F$ is defined to be
\begin{equation*}
    \text{COST}(F) = \sum_{(i, j) \in F} c(i, j).
\end{equation*}
Figure \ref{fig: illustrate costs of edge covers} gives two examples of an edge cover in the same graph. The cost of the edge cover in the left panel is $11.5$ and that in the right panel is $14$.
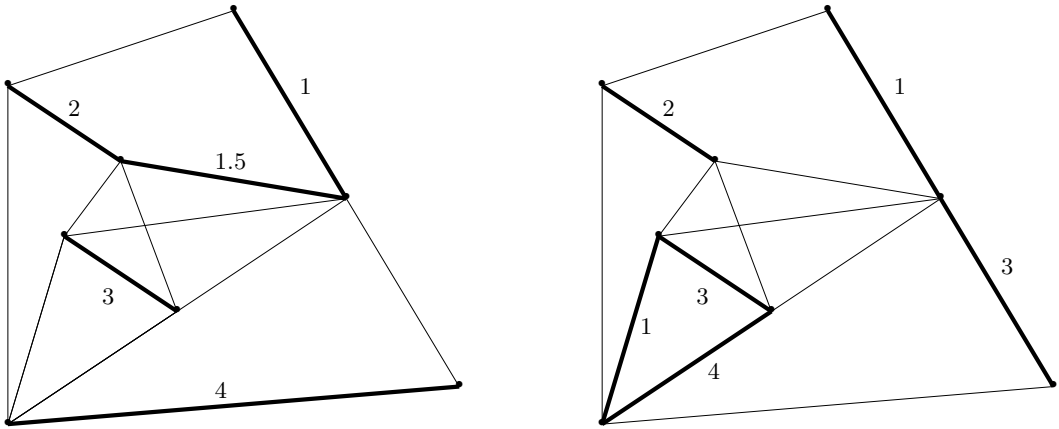
\begin{figure}[ht]
   \centering
     \subfloat{\begin{tikzpicture}[x = 1.5cm, y = 1cm]
  \foreach \Point in {(2.5, 2.5), (3, 5), (4, 4), (3.5, 6), (6.5, 3), (5.5, 5.5), (2.5, 7), (4.5, 8)}{
    \node at \Point {\textbullet};
    }
    
    \draw [ultra thin] (2.5, 2.5) -- (4, 4);
    \draw [ultra thin] (2.5, 2.5) -- (3, 5);
    \draw [ultra thin] (2.5, 2.5) -- (5.5, 5.5);
    \draw [ultra thin] (2.5, 2.5) -- (3, 5);
    \draw [ultra thin] (5.5, 5.5) -- (3, 5);
    \draw [ultra thick] (5.5, 5.5) -- (3.5, 6);
    \draw [ultra thick] (5.5, 5.5) -- (4.5, 8);
    \draw [ultra thick] (3, 5) -- (4, 4);
    \draw [ultra thin] (3.5, 6) -- (4, 4);
    \draw [ultra thin] (2.5, 2.5) -- (4, 4);
    \draw [ultra thin] (2.5, 2.5) -- (2.5, 7);
    \draw [ultra thick] (2.5, 7) -- (3.5, 6);
    \draw [ultra thin] (3, 5) -- (3.5, 6);
    \draw [ultra thin] (6.5, 3) -- (5.5, 5.5);
    \draw [ultra thin] (2.5, 7) -- (4.5, 8);
    \draw [ultra thick] (6.5, 3) -- (2.5, 2.5);

    \node [text width = 0.5 cm] at (5.25, 7) {$1$};
    \node [text width = 0.5 cm] at (4.5, 6) {$1.5$};
    \node [text width = 0.5 cm] at (3.2, 6.7) {$2$};
    \node [text width = 0.5 cm] at (3.5, 4.2) {$3$};
    \node [text width = 0.5 cm] at (4.5, 2.95) {$4$};
    \end{tikzpicture}}\hspace{1.5cm}
  \subfloat{\begin{tikzpicture}[x = 1.5cm, y = 1cm]
  \foreach \Point in {(2.5, 2.5), (3, 5), (4, 4), (3.5, 6), (6.5, 3), (5.5, 5.5), (2.5, 7), (4.5, 8)}{
    \node at \Point {\textbullet};
    }
    
    \draw [ultra thick] (2.5, 2.5) -- (4, 4);
    \draw [ultra thick] (2.5, 2.5) -- (3, 5);
    \draw [ultra thin] (2.5, 2.5) -- (5.5, 5.5);
    \draw [ultra thin] (2.5, 2.5) -- (3, 5);
    \draw [ultra thin] (5.5, 5.5) -- (3, 5);
    \draw [ultra thin] (5.5, 5.5) -- (3.5, 6);
    \draw [ultra thick] (5.5, 5.5) -- (4.5, 8);
    \draw [ultra thick] (3, 5) -- (4, 4);
    \draw [ultra thin] (3.5, 6) -- (4, 4);
    \draw [ultra thin] (2.5, 2.5) -- (4, 4);
    \draw [ultra thin] (2.5, 2.5) -- (2.5, 7);
    \draw [ultra thick] (2.5, 7) -- (3.5, 6);
    \draw [ultra thin] (3, 5) -- (3.5, 6);
    \draw [ultra thick] (6.5, 3) -- (5.5, 5.5);
    \draw [ultra thin] (2.5, 7) -- (4.5, 8);
    \draw [ultra thin] (6.5, 3) -- (2.5, 2.5);
    
    \node [text width = 0.5 cm] at (5.25, 7) {$1$};
    \node [text width = 0.5 cm] at (3.2, 6.7) {$2$};
    \node [text width = 0.5 cm] at (3.5, 4.2) {$3$};
    \node [text width = 0.5 cm] at (3, 3.8) {$1$};
    \node [text width = 0.5 cm] at (3.6, 3.2) {$4$};
    \node [text width = 0.5 cm] at (6.2, 4.6) {$3$};

    \end{tikzpicture}}
    \caption{\small Two edge covers (bold lines) of the same graph. The cost of the edge cover in the left panel is $1 + 2 + 1.5 + 3 + 4 = 11.5$, and the cost of the edge cover in the right panel is $1 + 3 + 2 + 3 + 1 + 4 = 14$.}
    \label{fig: illustrate costs of edge covers}
\end{figure}

Lemma \ref{lemma: nu star is cost of an edge cover} states that for a suitable choice of weights, homogeneity measure $\nu_{\text{star}}(\boldsymbol\Pi; \mathbf{i}^\ast, \mathcal{W})$ corresponds to the cost of a particular edge cover.

\begin{lemma}\rm
\label{lemma: nu star is cost of an edge cover}
Let $\boldsymbol\Pi = \{\Pi_1, \dots, \Pi_K\}$ be a partition, and $\mathcal{W}^{\text{suit}}$ a weighting scheme that assigns $w(\Pi_k) = |\Pi_k| - 1$ to subclass $\Pi_k$. Then
\begin{equation}
    \nu_{\text{star}}(\boldsymbol\Pi; \mathbf{i}^\ast, \mathcal{W}^{\text{suit}}) = \sum_{1\leq k\leq K} \sum_{j \in \Pi_k, j \neq i_k^\ast} \delta(i_k^\ast, j),
\end{equation}
and $ \nu_{\text{star}}(\boldsymbol\Pi; \mathbf{i}^\ast, \mathcal{W}^{\text{suit}})$ is equal to the cost of an edge cover with connected components $\{\Pi_1, \dots, \Pi_K\}$, each connected component $\Pi_k$ being a star with internal vertex $i^\ast_k$ and leaves $\{j \in \Pi_k, ~j\neq i^\ast_k\}$, and cost of any edge connecting two nodes $i$ and $j$ being $\delta(i, j)$.  
\end{lemma}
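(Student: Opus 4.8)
The plan is to prove the two assertions in turn, both of which amount to unwinding the definitions and then checking that a naturally associated subgraph is a legitimate edge cover.

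First I would establish the displayed identity for $\nu_{\text{star}}(\boldsymbol\Pi; \mathbf{i}^\ast, \mathcal{W}^{\text{suit}})$. Starting from \eqref{eqn: weighted star dist of Pi}, I substitute $w(\Pi_k) = |\Pi_k| - 1$ together with the definition \eqref{eqn: avg star dist among pi_k} of $\nu_{\text{star}}(\Pi_k; i_k^\ast)$. The weight $|\Pi_k| - 1$ cancels exactly the normalizing factor $1/(|\Pi_k| - 1)$ inside $\nu_{\text{star}}(\Pi_k; i_k^\ast)$, leaving $\sum_{j \in \Pi_k,\, j \neq i_k^\ast} \delta(i_k^\ast, j)$ for each $k$; summing over $k$ yields the claimed expression. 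Here one should note that, by the definition of a subclassification, $|\Pi_k| \geq 2$, so the cancellation is valid and no division by zero occurs.

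Second I would exhibit the edge cover explicitly. On the graph $G = (V, E)$ with $V = \mathcal{N}$ and edge costs $c(i, j) = \delta(i, j)$, define $F = \bigcup_{k=1}^{K} \{(i_k^\ast, j) : j \in \Pi_k,\ j \neq i_k^\ast\}$. I then verify three points. (i) $F \in \mathcal{F}$, i.e., $F$ is an edge cover: each vertex lies in exactly one subclass $\Pi_k$; if it is a leaf $j \neq i_k^\ast$ it is incident to $(i_k^\ast, j) \in F$, and if it is the center $i_k^\ast$ then, because $|\Pi_k| \geq 2$, there is at least one leaf and hence an incident edge of $F$. (ii) The connected components of $F$ are precisely $\Pi_1, \dots, \Pi_K$: within each $\Pi_k$ every edge of $F$ is incident to $i_k^\ast$, so the edges inside $\Pi_k$ form a connected star centered at $i_k^\ast$ with leaves $\{j \in \Pi_k,\ j \neq i_k^\ast\}$; since the $\Pi_k$ are pairwise disjoint, no two of these stars share a vertex, so they are exactly the components of $F$. (iii) $\text{COST}(F) = \sum_{(i,j) \in F} c(i,j) = \sum_{k=1}^{K} \sum_{j \in \Pi_k,\, j \neq i_k^\ast} \delta(i_k^\ast, j)$, where disjointness of the $\Pi_k$ again ensures each edge is summed exactly once. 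Combining (iii) with the identity from the first step gives $\nu_{\text{star}}(\boldsymbol\Pi; \mathbf{i}^\ast, \mathcal{W}^{\text{suit}}) = \text{COST}(F)$, which is the lemma.

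The argument is essentially bookkeeping, and I expect no substantive obstacle; the closest thing to one is checking that $F$ really covers every vertex, which is exactly where the structural requirement $|\Pi_k| \geq 2$ built into the notion of a subclassification is needed, and in observing that pairwise disjointness of the subclasses is what prevents the stars from merging and what prevents any edge from being double-counted in the cost.
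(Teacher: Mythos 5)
Your proof is correct and is exactly the definitional bookkeeping the paper intends: the weight $|\Pi_k|-1$ cancels the normalization in $\nu_{\text{star}}(\Pi_k; i_k^\ast)$, and the union of stars centered at the $i_k^\ast$ is the asserted edge cover (the paper in fact gives no separate proof of this lemma in Supplementary Material A, treating it as immediate, and the same star-cover correspondence is invoked inside its proof of Proposition 2). Your explicit checks that $|\Pi_k|\geq 2$ guarantees coverage of the centers and that disjointness of the subclasses prevents double counting are precisely the details the paper leaves tacit.
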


\subsection{A minimum cost edge cover induces an optimal subclassification with respect to suitable weights}
A minimum cost edge cover, i.e., the edge cover that attains the minimum cost among all edge covers of $G$, can be efficiently found in polynomial time (\citealp{schrijver2003combinatorial}); in fact, the problem of finding a minimum cost edge cover can be reduced to the problem of finding a minimum cost matching in an expanded non-bipartite graph. Moreover, Proposition \ref{prop: min cost edge cover yields an optimal partition} states that a minimum cost edge cover induces an optimal subclassification with respect to the homogeneity measure $\nu_{\text{star}}(\boldsymbol\Pi; \mathbf{i}^\ast, \mathcal{W}^{\text{suit}})$ when the edge cost is nonnegative.

\begin{proposition}\rm
\label{prop: min cost edge cover yields an optimal partition}
Let $G = (V, E)$ be a graph and $c: E\mapsto \mathbb{R}^{\geq 0}$ a nonnegative cost function. Then
\begin{enumerate}
    \item There exists a minimum cost edge cover whose connected components are all stars; call this minimum cost star-tiled edge cover $F_{\text{star}}^\ast$;
    \item Let the cost function $c(\cdot)$ of edge $e = (i, j)$ be the distance $\delta(i, j)$, then 
\[
\text{COST}(F_{\text{star}}^\ast) = \min_{\boldsymbol\Pi \in \mathcal{A}; ~i^\ast_k \in \Pi_k, 1 \leq k \leq K} \nu_{\text{star}}(\boldsymbol\Pi; \mathbf{i}^\ast, \mathcal{W}^{\text{suit}}).
\]
In other words, $F_{\text{star}}^\ast$ induces an optimal subclassification with respect to the homogeneity measure $\nu_{\text{star}}(\boldsymbol\Pi; \mathbf{i}^\ast, \mathcal{W}^{\text{suit}})$.
\end{enumerate}
\end{proposition}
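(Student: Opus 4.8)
The plan is to handle the two parts in turn, with the combinatorial content concentrated in Part~1 and Part~2 following almost formally from Lemma~\ref{lemma: nu star is cost of an edge cover}. Throughout I assume $G$ has no isolated vertices, so that $\mathcal{F}$ is nonempty; since $E$ is finite, the minimum of $\text{COST}(\cdot)$ over $\mathcal{F}$ is attained.

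For Part~1, among all minimum cost edge covers of $G$ let $F_{\text{star}}^\ast$ be one with the fewest edges, and I claim every connected component of $F_{\text{star}}^\ast$ is a star. First, every edge $e=(u,v)\in F_{\text{star}}^\ast$ must have an endpoint of degree one in $F_{\text{star}}^\ast$: if both $u$ and $v$ had degree $\ge 2$, then after deleting $e$ each of them still has an incident edge and every other vertex is untouched, so $F_{\text{star}}^\ast\setminus\{e\}$ is still an edge cover; because $c(\cdot)\ge 0$ its cost is no larger than $\text{COST}(F_{\text{star}}^\ast)$, hence equal by minimality of cost, contradicting the choice of $F_{\text{star}}^\ast$ as having the fewest edges. (Nonnegativity of $c$ is essential here.) Second, a connected graph in which every edge has a degree-one endpoint contains no cycle (a cycle edge has both endpoints of degree $\ge 2$) and no path on four vertices (its middle edge has both endpoints of degree $\ge 2$), hence is a tree of diameter at most two, i.e., a star. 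Applying this to each component (each has at least one edge since $F_{\text{star}}^\ast$ is an edge cover) proves Part~1.

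For Part~2, set $c(i,j)=\delta(i,j)$ and prove the equality by two inequalities. For ``$\le$'': given any $\boldsymbol\Pi\in\mathcal{A}$ with reference units $\mathbf{i}^\ast$, Lemma~\ref{lemma: nu star is cost of an edge cover} identifies $\nu_{\text{star}}(\boldsymbol\Pi;\mathbf{i}^\ast,\mathcal{W}^{\text{suit}})$ with the cost of the edge cover that makes each $\Pi_k$ a star centered at $i_k^\ast$; since $|\Pi_k|\ge 2$ for every $k$, this is a genuine edge cover of $G$, so $\text{COST}(F_{\text{star}}^\ast)\le \nu_{\text{star}}(\boldsymbol\Pi;\mathbf{i}^\ast,\mathcal{W}^{\text{suit}})$, and minimizing over $(\boldsymbol\Pi,\mathbf{i}^\ast)$ yields the inequality. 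For ``$\ge$'': by Part~1 the components of $F_{\text{star}}^\ast$ are stars, so their vertex sets $\Pi_1,\dots,\Pi_K$ partition $V=\mathcal{N}$, each satisfies $|\Pi_k|\ge 2$ (a star contains an edge and an edge cover has no isolated vertex), and the star centers give a valid $\mathbf{i}^\ast$; Lemma~\ref{lemma: nu star is cost of an edge cover} then gives $\text{COST}(F_{\text{star}}^\ast)=\nu_{\text{star}}(\boldsymbol\Pi;\mathbf{i}^\ast,\mathcal{W}^{\text{suit}})\ge \min_{(\boldsymbol\Pi,\mathbf{i}^\ast)}\nu_{\text{star}}(\boldsymbol\Pi;\mathbf{i}^\ast,\mathcal{W}^{\text{suit}})$. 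Combining the two inequalities yields the claimed identity, and since the partition read off from $F_{\text{star}}^\ast$ attains the minimum, it is optimal with respect to $\nu_{\text{star}}(\boldsymbol\Pi;\mathbf{i}^\ast,\mathcal{W}^{\text{suit}})$.

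The main obstacle is Part~1, namely arguing that a minimum cost edge cover can be taken to be star-tiled. The two delicate points are breaking ties among minimum cost edge covers by also minimizing the edge count (so that the edge-deletion argument produces a genuine contradiction rather than merely another optimal solution), and the small structural fact that a connected graph in which every edge has a leaf endpoint is a star (which requires ruling out both cycles and paths on four vertices). Part~2 is then essentially bookkeeping on top of Lemma~\ref{lemma: nu star is cost of an edge cover}; the only thing to watch is that both directions of the correspondence respect the feasibility constraints $|\Pi_k|\ge 2$ and ``every vertex is covered.''
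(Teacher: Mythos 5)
Your proof is correct, and Part 2 (the two inequalities obtained by reading $\nu_{\text{star}}(\boldsymbol\Pi;\mathbf{i}^\ast,\mathcal{W}^{\text{suit}})$ as the cost of a star-tiled edge cover via Lemma \ref{lemma: nu star is cost of an edge cover}, in both directions) is essentially identical to the paper's argument. Where you genuinely diverge is Part 1. The paper proves existence of a star-tiled minimum cost edge cover constructively: starting from any minimum cost edge cover, it repeatedly finds a length-3 path $i-i'-j-j'$ and deletes the middle edge, using nonnegativity of $c$ to keep the cost from increasing, and stops when no such path remains. You instead make an extremal choice --- among minimum cost edge covers, one with the fewest edges --- and show every edge must have a degree-one endpoint, from which the star structure follows because a connected graph with that property has no cycle and no path on four vertices. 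Both arguments rest on the same nonnegativity of the cost, but yours is tighter on one point: the paper's step ``no length-3 path hence all components are stars'' silently ignores cycle components (a triangle contains no path on four distinct vertices yet is not a star, and with zero-cost edges a triangle could survive the paper's deletion procedure), whereas your degree-one-endpoint argument excludes cycles outright, and the fewest-edges tie-break turns ``cost does not increase'' into a genuine contradiction even when the deleted edge has zero cost. The paper's route, on the other hand, is more algorithmic in flavor (it tells you how to post-process $F^\ast$ into $F^\ast_{\text{star}}$, which is how the main text describes obtaining $F^\ast_{\text{star}}$ from the output of Algorithm \ref{alg: find min-cost edge cover}), while yours is a cleaner pure-existence argument. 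One shared implicit assumption, harmless here but worth noting, is that in Part 2 the star edges $(i_k^\ast,j)$ prescribed by an arbitrary subclassification must actually lie in $E$; like the paper, you are tacitly working with the complete graph on $\mathcal{N}$, which is the setting in which the matching problem is posed.
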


\subsection{An efficient algorithm that finds minimum cost edge cover}
Algorithm \ref{alg: find min-cost edge cover} transforms the problem of finding a minimum cost edge cover into an optimal non-bipartite matching problem (\citealp{schrijver2003combinatorial}), the computation complexity of which is $O(|V|^3)$ in a graph with $|V|$ vertices. Algorithm \ref{alg: find min-cost edge cover} returns a minimum cost edge cover $F^\ast$; we may further process $F^\ast$ as described in the proof of Proposition \ref{prop: min cost edge cover yields an optimal partition} to obtain $F^\ast_{\text{star}}$, a minimum cost edge cover consisting of all stars. The complexity of finding a minimum cost edge cover is the same as optimal non-bipartite matching. The algorithm is further illustrated in Supplementary Material D. We will refer to the subclassification scheme induced by $F^\ast_{\text{star}}$ as a ``non-bipartite full match design."

\begin{algorithm} 
\SetAlgoLined
\caption{Finding a minimum cost edge cover for a graph $G = (V, E)$} \label{alg: find min-cost edge cover}
\vspace*{0.12 cm}
\KwIn{A graph $G = (V, E)$}
\vspace*{0.12 cm}
\begin{enumerate}
    \item Create a copy of $G = (V, E)$ with the same topology and edge cost; denote it as $G' = (V', E')$.
    \item For each $v \in V$ and its counterpart $v' \in V'$, add an edge $(v, v')$; a total of $|V|$ edges are added.
    \item Assign a cost equal to $2\mu(v)$ to each edge $(v, v')$, where $\mu(v)$ denotes the minimum cost among all edges incident to $v \in V$.
    \item Solve an optimal non-bipartite matching problem in the graph $G \cup G'$; let $M^\ast$ denote this optimal matching.
    \item Delete from $M^\ast$ any edge in $E'$; replace any edge of the form $(v, v')$ in $M^\ast$ with an edge $(v, u) \in E$ such that $\delta(v, u) = \mu(v)$; denote by $F^\ast$ the result;
    \item Return the minimum cost edge cover $F^\ast$.
\end{enumerate}
\end{algorithm}

\section{Additional design considerations}
\label{sec: other design aspects}
\subsection{Dose caliper}
\label{subsec: incorporate dose}
With a binary treatment, there is a distance between each treated unit and each control unit, and this distance unequivocally measures the closeness of the treated and control units in their observed covariates. With a continuous or many-level treatment/encouragement dose, homogeneity in observed covariates is still an important aspect; however, distances in this case may further take into account the treatment/encouragement dose in order to design matched sets that are homogeneous in observed covariates \emph{and} well-separated in their exposure doses (\citealp{lu2001matching}). 

This is in particular relevant in our TEE/CABG application with an IV-defined continuous exposure. It is widely acknowledged that confidence intervals obtained from weak instruments are often excessively long and non-informative (\citealp{imbens2005robust}). A large encouragement dose, on the other hand, would typically create stronger incentives for units to accept the treatment, increase the compliance rate, and eventually render the statistical inference substantially more powerful (\citealp{baiocchi2010building, heng2019instrumental, zhang2020bridging}). For instance, \citet{heng2019instrumental} derived the asymptotic relative efficiency (ARE) of some commonly-used test statistics when testing the same proportional treatment effect model (\citealp{small2008war}) with two instrumental variables of different strengths. They found that for a weaker IV with compliance rate $\iota_{1}$ to achieve the same efficiency as a stronger IV with compliance rate $\iota_{2}$ $(\iota_2 \geq \iota_1)$, the weaker IV needs to have a sample size $(\iota_2/\iota_1)^2$ times larger than that of a stronger IV. Analytic results of this kind provide incentives to separate exposure doses in the design stage.

How to pursue this design aspect in a non-bipartite full match? We borrow the idea of a ``caliper" from the literature on ``caliper matching" (\citealp{cochran1973controlling}) and ``propensity score caliper" (\citealp{rosenbaum1985constructing}). Let $\delta(i, j)$ measure the distance between observed covariates, and $Z_i$ and $Z_j$ the encouragement doses of unit $i$ and $j$, respectively. One straightforward way to incorporate the encouragement dose is to define a new distance $\delta'(i, j) = \delta(i, j) + C \times \mathbbm{1}\{|Z_i - Z_j| \leq \tau_0\}$, where $\tau_0$ is called a \emph{dose caliper} and $C$ a large penalty applied when $Z_i$ and $Z_j$ differ by less than or equal to the caliper size. Hence, a large $C$ would discourage an edge cover from connecting unit $i$ and $j$ whenever their doses are within the caliper size. Analogous to a propensity score caliper, a dose caliper may be implemented both as a hard constraint, by setting $C = \infty$ or equivalently removing edges $\{e = (i, j)~\text{such that}~|Z_i - Z_j| \leq \tau_0\}$, or as a soft constraint by setting $C$ to a large but finite number (\citealp[Section 2.3]{zhang2021match2C}). In Supplementary Material C.1, we illustrate the dose caliper using a simulation study.

A dose caliper facilitates interpreting the chosen reference unit in each subclass. Consider implementing a hard caliper with size $\tau_0$. Let $Z_{i^\ast_k}$ be the dose of the internal node of a subclass $\Pi_k$ and $\mathcal{Z}_{\text{leaves},k} = \{Z_j, j \in \Pi_k, j \neq i^\ast_k\}$ the collection of doses of leaves in $\Pi_k$. By definition of a hard dose caliper, the dose of the internal node $i^\ast_k$ necessarily satisfies \[
Z_{i^\ast_k} > Z_j + \tau_0 \quad \text{or} \quad Z_{i^\ast_k} \leq Z_j - \tau_0,
\]
for all $Z_j \in \mathcal{Z}_{\text{leaves},k}$. In the former case, the internal node corresponds to the unit with the highest dose (and at least $\tau_0$ higher in dose than any other unit in the same subclass) and we may view it as a \emph{pseudo-treated} unit in the subclass; in the latter case, it is one with the lowest dose and can be viewed as a \emph{pseudo-control} unit. This particular structure, one pseudo-treated and a variable number of pseudo-control units (or one pseudo-control and a variable number of pseudo-treated units) is analogous to that of full matching with a binary exposure (\citealp{hansen2004full,hansen2006optimal}); see Figure \ref{fig: illustrate dose caliper} for an illustration. In the TEE/CABG application, matching with a dose caliper forces each matched set to have one high-preference hospital and a few low-preference hospitals, or one low-preference hospital and a few high-preference hospitals, and comparisons in the health outcomes will be made among high-preference and low-preference hospitals within each formed subclass.

\begin{figure}[h]
   \centering
     \subfloat{\scalebox{0.8}{\begin{tikzpicture}
\draw[help lines, color=gray!30, dashed] (0,0) grid (7.9,4.9);
\draw[->,ultra thick] (0,0)--(8,0) node[right]{};
\draw[->,ultra thick] (0,0)--(0,5) node[above]{\large Exposure};


    \filldraw [black] (1.7,4) circle (2pt);
    \node [text width = 0.5 cm, text centered] at (1.7, 4.3) {$\tau_1$};
    
    \filldraw [black] (4,4) circle (2pt);
    \node [text width = 0.5 cm, text centered] at (4, 4.3) {$\tau_2$};

    \filldraw [black] (6,4) circle (2pt);
    \node [text width = 0.5 cm, text centered] at (6, 4.3) {$\tau_3$};
     \filldraw [black] (6.4,4) circle (2pt);
    \node [text width = 0.5 cm, text centered] at (6.4, 4.3) {$\tau_4$};
     \filldraw [black] (6.8,4) circle (2pt);
    \node [text width = 0.5 cm, text centered] at (6.8, 4.3) {$\tau_5$};

    \filldraw [black] (1,1) circle (2pt);
    \node [text width = 0.5 cm, text centered] at (1, 0.7) {$\gamma_1$};
    
    \filldraw [black] (2,1) circle (2pt);
    \node [text width = 0.5 cm, text centered] at (2, 0.7) {$\gamma_2$};

    \filldraw [black] (3.5,1) circle (2pt);
    \node [text width = 0.5 cm, text centered] at (3.5, 0.7) {$\gamma_3$};

    \filldraw [black] (3.8,1) circle (2pt);
    \node [text width = 0.5 cm, text centered] at (3.8, 0.7) {$\gamma_4$};

    \filldraw [black] (4.2,1) circle (2pt);
    \node [text width = 0.5 cm, text centered] at (4.2, 0.7) {$\gamma_5$};

    \filldraw [black] (4.5,1) circle (2pt);
    \node [text width = 0.5 cm, text centered] at (4.5, 0.7) {$\gamma_6$};

    \filldraw [black] (6.5,1) circle (2pt);
    \node [text width = 0.5 cm, text centered] at (6.5, 0.7) {$\gamma_7$};

    \draw [ultra thick, color = black] (1.7, 4) -- (2, 1); 
    \draw [ultra thick, color = black] (1.7, 4) -- (1, 1); 

    \draw [ultra thick] (4, 4) -- (4.2, 1);
    \draw [ultra thick] (4, 4) -- (4.5, 1); 
    \draw [ultra thick] (4, 4) -- (3.5, 1); 
    \draw [ultra thick] (4, 4) -- (3.8, 1);

    \draw [ultra thick] (6.5, 1) -- (6, 4); 
    \draw [ultra thick] (6.5, 1) -- (6.4, 4); 
    \draw [ultra thick] (6.5, 1) -- (6.8, 4);

    \node [text width = 0.5 cm, text centered] at (-1.3, 4) {\large Treated};
    \node [text width = 0.5 cm, text centered] at (-1.3, 1) {\large Control};
    \node [text width = 0.5 cm, text centered] at (4, -0.4) {\Large X};

\end{tikzpicture}}}
  \subfloat{\scalebox{0.8}{\begin{tikzpicture}
\draw[help lines, color=gray!30, dashed] (0,0) grid (7.9,4.9);
\draw[->,ultra thick] (0,0)--(8,0) node[right]{};
\draw[->,ultra thick] (0,0)--(0,5) node[above]{\large Dose};


    \filldraw [black] (1.7,4) circle (2pt);
    \node [text width = 0.5 cm, text centered] at (1.7, 4.3) {$i^\ast_1$};
    
    \filldraw [black] (1,1) circle (2pt);
    \filldraw [black] (1.5,1.2) circle (2pt);
    \filldraw [black] (2,1.5) circle (2pt);
    
    \draw [ultra thick, color = black] (1.7, 4) -- (2, 1.5); 
    \draw [ultra thick, color = black] (1.7, 4) -- (1.5, 1.2); 
    \draw [ultra thick, color = black] (1.7, 4) -- (1, 1); 
    

    \filldraw [black] (3.2,2.5) circle (2pt);
    \node [text width = 0.5 cm, text centered] at (3.2, 2.8) {$i^\ast_2$};
    
    \filldraw [black] (2.8,0.5) circle (2pt);
    \filldraw [black] (3.5,0.3) circle (2pt);
    
    \draw [ultra thick, color = black] (3.2, 2.5) -- (2.8, 0.5); 
    \draw [ultra thick, color = black] (3.2, 2.5) -- (3.5, 0.3); 
    

    \filldraw [black] (4.3,4.2) circle (2pt);
    \filldraw [black] (5,3.5) circle (2pt);

    \filldraw [black] (4.7, 1.8) circle (2pt);
    \node [text width = 0.5 cm, text centered] at (4.7, 1.5) {$i^\ast_3$};
    
    \draw [ultra thick, color = black] (4.7, 1.8) -- (4.3, 4.2); 
    \draw [ultra thick, color = black] (4.7, 1.8) -- (5, 3.5); 
  

    \filldraw [black] (6.5,3) circle (2pt);
    \node [text width = 0.5 cm, text centered] at (6.5, 3.3) {$i^\ast_4$};
    
    \filldraw [black] (6.2,0.2) circle (2pt);
    \filldraw [black] (6.5,0.5) circle (2pt);
    \filldraw [black] (7, 1) circle (2pt);
    
    \draw [ultra thick, color = black] (6.5, 3) -- (6.2, 0.2); 
    \draw [ultra thick, color = black] (6.5, 3) -- (6.5, 0.5); 
    \draw [ultra thick, color = black] (6.5, 3) -- (7, 1);

    \node [text width = 0.5 cm, text centered] at (-0.5, 0) {\large $0$};
    \node [text width = 0.5 cm, text centered] at (-0.5, 1) {\large $1$};
    \node [text width = 0.5 cm, text centered] at (-0.5, 2) {\large $2$};
    \node [text width = 0.5 cm, text centered] at (-0.5, 3) {\large $3$};
    \node [text width = 0.5 cm, text centered] at (-0.5, 4) {\large $4$};
    \node [text width = 0.5 cm, text centered] at (4, -0.4) {\Large X};
\end{tikzpicture}}}
    \caption{\small Left panel: full match with a binary treatment. Right panel: non-bipartite full match with a dose caliper $\tau_0 = 2$. From left to right are $\Pi_1$, $\Pi_2$, $\Pi_3$, and $\Pi_4$. The internal node in each subclass $\Pi_k$ is either the unit with the highest dose as in $\Pi_1$, $\Pi_2$, and $\Pi_4$, or the unit with the lowest dose as in $\Pi_3$. The internal node has dose at least $\tau_0 = 2$ larger than leaves in the same subclass.}
    \label{fig: illustrate dose caliper}
\end{figure}
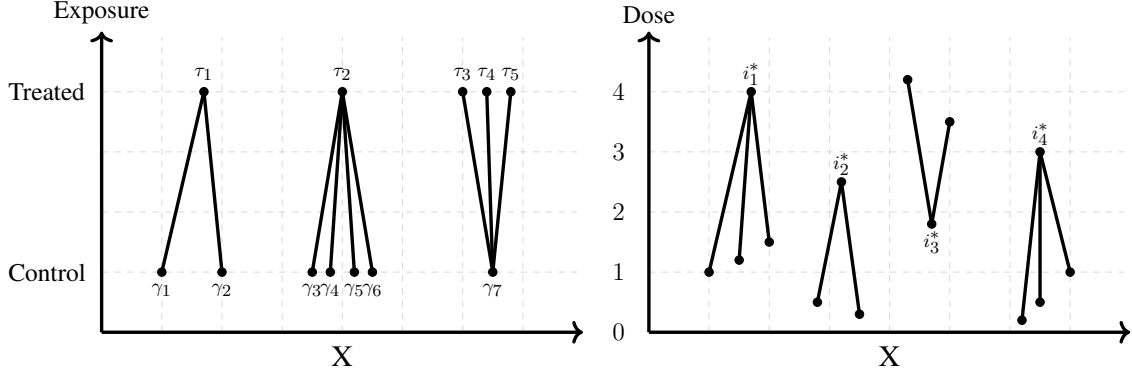

\subsection{Incorporating matched-sets-cardinality penalty}
\label{subsec: regulate matched sets number}
In a bipartite full matching, there are two parameters controlling for the maximum number of treated and control units in each matched set, respectively. Options \textsf{max.controls} and \textsf{min.controls} of function \textsf{fullmatch} in the \textsf{R} pacakge \textsf{optmatch} (\citealp{hansen2006optimal, hansen2007optmatch}) serve this purpose. For example, setting $\textsf{min.controls} = 0.25$ would restrict the matched set to have at most $4$ treated subjects for one control, and $\textsf{max.controls} = 4$ at most $4$ controls for one treated; together, they restrict the cardinality of matched sets to be at most $4 + 1 = 5$. 

In the context of non-bipartite matching with a continuous dose, we may also want to have some control over the size of subclasses and hence how many subclasses in a subclassification. To this end, we consider adding to a homogeneity measure a proper penalty on the cardinality of subclasses. Let the homogeneity measure be $\nu_{\text{star}}$, and consider the following modified homogeneity measure:
\begin{equation*}
    \nu^\lambda_{\text{star}}(\boldsymbol\Pi; \mathbf{i}^\ast, \mathcal{W}) = \nu_{\text{star}}(\boldsymbol\Pi; \mathbf{i}^\ast, \mathcal{W}) + \lambda \times \left\{\sum_{\Pi_k \in \boldsymbol\Pi} |\Pi_k| - 2 \right\}.
\end{equation*} 

\begin{definition}\rm
\label{def: optimal wrt nu star and lambda}
A subclassification $\boldsymbol{\Pi}_{\text{opt}, \lambda}^{\nu_{\text{star}}} = \{\Pi_{\text{opt},\lambda, 1}^{\nu_{\text{star}}}, \dots, \Pi_{\text{opt},\lambda, K}^{\nu_{\text{star}}}\}$ with reference units $\mathbf{i}_{\text{opt}, \lambda}^{\ast} = (i_{\text{opt},\lambda, 1}^{\ast}, \dots, i_{\text{opt}, \lambda, K}^{\ast})$, $i^{\ast}_{\text{opt}, \lambda, k} \in \Pi_{\text{opt}, \lambda, k}^{\nu_{\text{star}}}$, is said to be optimal with respect to the homogeneity measure $\nu^\lambda_{\text{star}}(\boldsymbol\Pi; \mathbf{i}^\ast, \mathcal{W})$ if 
\begin{equation*}
    \begin{split}
        (\boldsymbol\Pi_{\text{opt}, \lambda}^{\nu_{\text{star}}}, \mathbf{i}_{\text{opt}, \lambda}^{\ast})  &= \argmin_{\boldsymbol\Pi \in \mathcal{A}} \argmin_{i^\ast_k \in \Pi_k, 1 \leq k \leq K} \nu^\lambda_{\text{star}}(\boldsymbol\Pi; \mathbf{i}^\ast, \mathcal{W}) \\
        & = \argmin_{\boldsymbol\Pi \in \mathcal{A}} \argmin_{i^\ast_k \in \Pi_k, 1 \leq k \leq K} \left\{\nu_{\text{star}}(\boldsymbol\Pi; \mathbf{i}^\ast, \mathcal{W}) + \lambda \times \left\{\sum_{\Pi_k \in \boldsymbol\Pi} |\Pi_k| - 2 \right\}\right\}.
    \end{split}
\end{equation*}
\end{definition}
When $\lambda = 0$, this definition reduces to Definition \ref{def: optimal wrt nu star}; when $\lambda = \infty$, $\boldsymbol\Pi_{\text{opt}, \lambda = \infty}^{\nu_{\text{star}}}$ reduces to the solution to an optimal non-bipartite pair match because $|\Pi_k| - 2 = 0$ for matched pairs. As $\lambda$ increases from $0$ to $\infty$, we explore the middle ground between a subclassification that is optimal with respect to $\nu_{\text{star}}(\boldsymbol\Pi; \mathbf{i}^\ast, \mathcal{W})$ and an optimal non-bipartite pair matching.

With suitable weights $\mathcal{W}^{\text{suit}}$, we can find $\boldsymbol\Pi_{\text{opt}, \lambda}^{\nu_{\text{star}}}$ efficiently via a slightly modified version of Algorithm \ref{alg: find min-cost edge cover}; in fact, it suffices to modify Step 3 in Algorithm \ref{alg: find min-cost edge cover} as follows:
\begin{enumerate}
    \item[$3^\ast.$] Assign a cost equal to $2\mu(v) + 2\lambda$ to each edge $(v, v')$, where $\mu(v)$ denotes the minimum cost among all edges incident to $v \in V$. 
\end{enumerate}
Let $F^\ast_\lambda$ denote the output from the modified Algorithm \ref{alg: find min-cost edge cover}. Following a similar argument in the proof of Proposition \ref{prop: min cost edge cover yields an optimal partition}, we may further process $F^\ast_\lambda$ to obtain $F^\ast_{\text{star}, \lambda}$, an edge cover consisting of all stars. In Supplementary Material B.5, we prove a result analogous to Proposition \ref{prop: min cost edge cover yields an optimal partition}, which states that $F^\ast_{\text{star}, \lambda}$ induces a subclassification that is optimal with respect to $\nu^\lambda_{\text{star}}(\boldsymbol\Pi; \mathbf{i}^\ast, \mathcal{W}^{\text{suit}})$. Supplementary Material C.2 illustrates how to choose $\lambda$ with a simulation study.

\section{Simulation studies I: non-bipartite matching as a preprocessing step to remove bias in parametric causal inference with a continuous treatment dose}
\label{sec: nbp matching as preprocessing}
\subsection{Goal and structure}
\label{subsec: simulation studies I goal and structure}
It is widely acknowledged that with a binary treatment, combining statistical matching with regression adjustment renders analysis more robust to model misspecification and helps remove bias in treatment effect estimation (\citealp{rubin1973matching, rubin1979using}); hence, many authors advocate using statistical matching as a nonparametric preprocessing step before parametric causal inference (\citealp{ho2007matching, stuart2010matching}). The primary goal of this section is to assess if combining the non-bipartite full matching developed in this article and regression adjustment helps reduce model dependence and remove bias in the continuous treatment setting. 

Our simulation studies in this section can be compactly represented as a $2 \times 2 \times 2 \times 3 \times 2 \times 4$ factorial study with the following factors:

\begin{description}
\item \textbf{Factor 1:} treatment effect estimator: $\widehat{\beta}_{\text{reg}}$ and $\widehat{\beta}_{\text{reg, match}}$. 
\item \textbf{Factor 2:} dimension of covariates, $d$: $5$ and $10$.
\item \textbf{Factor 3:} sample size, $n$: $500$ and $2000$.
\item \textbf{Factor 4:} treatment dose model: a multi-level treatment $Z \sim \text{Uniform}\{-2, -1, 0, 1, 2\}$; two continuous treatments $Z \sim \text{Uniform}[1 - \sqrt{3}, 1 + \sqrt{3}]$ and $Z \sim \text{Exponential}(1)$ so that the continuous treatment Z has mean $1$ and variance $1$. 
\item \textbf{Factor 5:} observed covariates distribution: $\boldsymbol X \sim \text{Multivariate Normal}\left(\boldsymbol\mu, \boldsymbol \Sigma\right)$, with $\boldsymbol \mu = (cZ, 0, \dots, 0)^{\text{T}}$ and $\boldsymbol\Sigma = \bigl( \begin{smallmatrix} 4 & 0\\ 0 & \boldsymbol I_{d-1}\end{smallmatrix}\bigr)$ with $c = -2$ and $2$. 
\item \textbf{Factor 6:} response model: $Y \mid \boldsymbol X, Z \sim \text{Normal}\left(\mathbbm{1}\left\{\exp\{aX_1 + bX_2\} \leq 100\right\} + \beta Z + 1, 1\right)$ with $(a, b) = (-0.5, 0.5)$, $(0.5, -0.5)$, $(0.5, 0.5)$, and $(-0.5, -0.5)$, and $\beta = 1$.
\end{description}

Factor $1$ defines the procedures, and Factor $2$ through $6$ define the data generating processes. In particular, we considered three different models for a non-binary treatment, and closely followed \citet{rubin1979using} and \citet{zhang2021match2C} in specifying the data generating processes for the observed covariates $\boldsymbol X$ and the response surfaces $Y \mid \boldsymbol X, Z$; see \citet[Section 3]{rubin1979using} and \citet[Section 5.1]{zhang2021match2C} for some rationals behind these data generating processes. While both \citet{rubin1979using} and \citet{zhang2021match2C} considered an additive effect for a binary treatment, we considered an effect proportional to the magnitude of the treatment dose. Two treatment effect estimators being considered here are $\widehat{\beta}_{\text{reg}}$, the naive regression adjustment estimator, and $\widehat{\beta}_{\text{reg, match}}$, the regression adjustment estimator with a fixed effect for each matched set after non-bipartite full matching. We calculate the bias, standard error, and mean squared error of both estimators under each of the $96$ data generating processes defined by Factor 2 through 6.

\subsection{Simulation results}
Using non-bipartite matching as a preprocessing step followed by regression adjustment seems to help reduce bias and mean squared error in $92/96$ circumstances. Supplementary Material C.3 summarizes the bias and mean squared error of $\widehat{\beta}_{\text{reg}}$ and $\widehat{\beta}_{\text{reg, match}}$ under each data generating process. Figure \ref{fig: simulation I visualization} visualizes the gain in bias reduction under a wide range of data-generating processes. We also observe that for a fixed $d$, the gain from using statistical matching as a nonparametric preprocessing step seems to increase as $n$ increases, which is as expected because with a larger $n/d$ ratio, matched sets formed by non-bipartite matching tend to be more homogeneous; on the other hand, when the model is misspecified, a larger sample size does not seem to help remove bias of a naive regression estimator. For instance, when $d = 5$, $Z \sim \text{Exponential}(1)$, $(c, a, b) = (2, 0.5, -0.5)$, $\widehat{\beta}_{\text{reg}}$ has a bias of $1.515$ when $n = 500$ and a bias of $1.451$ when $n = 2000$, while $\widehat{\beta}_{\text{reg, match}}$ has a bias of $0.837$ when $n = 500$ and the bias reduces to $0.386$ when $n = 2000$.

Consistent with the binary treatment case studied in \citet{rubin1973matching, rubin1979using} and discussed in \citet{ho2007matching}, our simulation results seem to suggest that using non-bipartite matching as a nonparametric preprocessing step before regression analysis with a continuous treatment dose helps reduce model dependence and remove some bias.

\begin{figure}
    \centering
    \includegraphics[width=\textwidth]{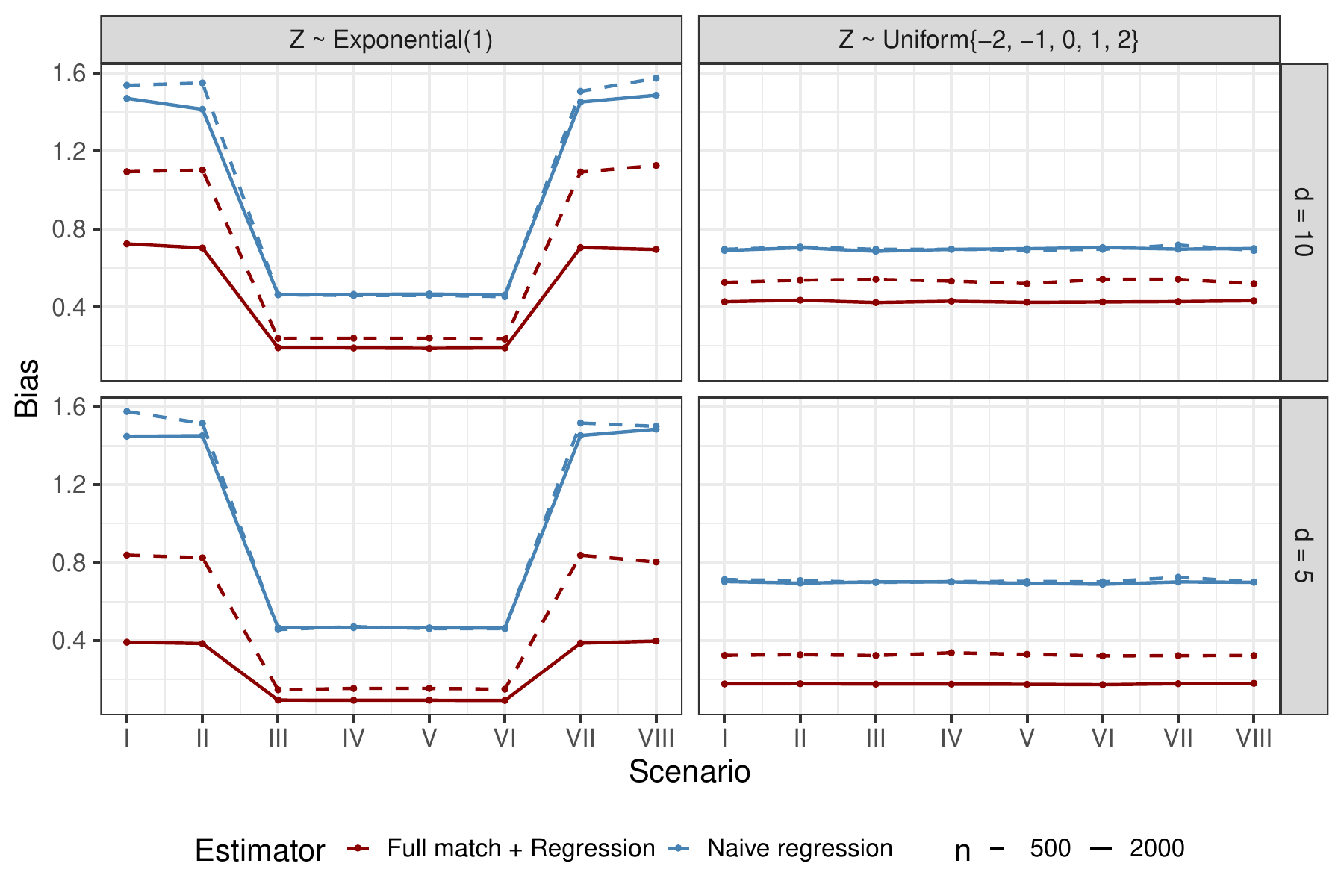}
    \caption{\small Simulation results: Bias of estimators $\widehat{\beta}_{\text{reg}}$ (blue) and $\widehat{\beta}_{\text{reg, match}}$ (red) for different data-generating processes. Scenarios I-VIII correspond to $(c, a, b) = (-2, -0.5, -0.5)$, $(-2, -0.5, 0.5)$, $(-2, 0.5, -0.5)$, $(-2, 0.5, 0.5)$, $(2, -0.5, -0.5)$, $(2, -0.5, 0.5)$, $(2, 0.5, -0.5)$, $(2, 0.5, 0.5)$.}
    \label{fig: simulation I visualization}
\end{figure}

\section{Simulation studies II: comparing non-bipartite full matching to pair matching}
\label{sec: nbp matching full vs pair}

\subsection{Goal and structure}
In this section, we systematically compare non-bipartite full matching with non-bipartite pair matching. We consider a continuous dose $Z \sim \text{Uniform}[0, 1]$, $d = 5$, and the following factors that define a data generating process:

\begin{description}
\item \textbf{Factor 1:} sample size, $n$: $500$ and $2000$.
\item \textbf{Factor 2:} observed covariates distribution: $\boldsymbol X \sim \text{Multivariate Normal}\left(\boldsymbol\mu, \boldsymbol \Sigma\right)$, with $\boldsymbol \mu = (cZ, 0, \dots, 0)^{\text{T}}$ and $\boldsymbol\Sigma = \bigl( \begin{smallmatrix} 2^2 & 0\\ 0 & \boldsymbol I_{d-1}\end{smallmatrix}\bigr)$ with $c = -2$, $-1$, $1$, and $2$. 
\end{description}
We compare the non-bipartite full matching procedure as in Algorithm \ref{alg: find min-cost edge cover} and the optimal non-bipartite pair matching procedure in \citet{lu2001matching,lu2011optimal}. This is the third factor:
\begin{description}
 \item \textbf{Factor 3:} matching procedure: non-bipartite full matching $\mathcal{M}_{\text{nbp, full}}$ and optimal non-bipartite pair matching $\mathcal{M}_{\text{nbp, pair}}$.
\end{description}
For both matching procedures, we consider the following distance:
\[
\delta(i, j) = \text{Mahalanobis distance}(i, j) + C \times \mathbbm{1}\{|Z_i - Z_j| \leq \tau_0\}.
\]
As discussed in Section \ref{subsec: incorporate dose}, $\delta(i, j)$ may incorporate the treatment/encouragement dose $Z$ by adjusting the dose caliper $\tau_0$ and letting $C$ be a large penalty. Throughout the simulations, we let $C = 100,000$ and $\tau_0$ be the fourth factor:
\begin{description}
 \item \textbf{Factor 4:} dose caliper size, $\tau_0$: $0$, $0.1$, $0.2$, $0.3$, and $0.4$.
\end{description}

To conclude, Factor $1$ and $2$ define the $2 \times 4 = 8$ data generating processes and Factor $3$ and $4$ define the $2 \times 5 = 10$ procedures to be studied.

\subsection{Measurements of success}
For a subclassification $\boldsymbol\Pi = \{\Pi_1, \dots, \Pi_K\}$, we compute $\nu(\Pi_k)$, the average Mahalanobis distance among all $(1/2)\times|\Pi_k| \times (|\Pi_k| - 1)$ pairwise comparisons in each subclass $\Pi_k$, and then report the $25$th, $50$th (median), $75$th, and $90$th empirical quantiles of $\{v(\Pi_k),~k = 1, \dots, K\}$. We also report two weighted averages of $\{v(\Pi_k),~k = 1, \dots, K\}$. The weighting scheme $\mathcal{W}^{\text{const}}$ assigns an equal weight to each matched set, regardless of its size, which corresponds to letting $w(\Pi_k) \propto 1$ in Definition \ref{def: optimal wrt nu}; denote by $\textsf{HM1}$ this first measure. The second weighting scheme $\mathcal{W}^{\text{suit}}$ assigns $w(\Pi_k) \propto |\Pi_k| - 1$ as described in Lemma \ref{lemma: nu star is cost of an edge cover}. Denote by $\textsf{HM2}$ this second measure.

Next, for each subclass, we compute $\nu^\ast_{\text{star}}(\Pi_k)$, the minimum $\nu_{\text{star}}(\Pi_k; i_k^\ast)$ (based on the Mahalanobis distance) among all $i_k^\ast \in \Pi_k$ as defined in Definition \ref{def: optimal wrt nu star}. We also report two weighted averages of $\{\nu^\ast_{\text{star}}(\Pi_k),~k = 1, \dots, K\}$: one with the weighting scheme $\mathcal{W}^{\text{const}}$ and the other $\mathcal{W}^{\text{suit}}$. Denote by $\textsf{HM3}$ and $\textsf{HM4}$ these two measures. Smaller values of $\textsf{HM1}$ through $\textsf{HM4}$ indicate better matched-sets homogeneity. Note that all four measures reduce to the same measure when the subclassification $\boldsymbol\Pi$ consists of only matched pairs.

We also consider a measurement of overall balance. In each subclass $\Pi_k$, let $\overline{\mathbf{X}}_{i, k, \text{high}}$ denote the average value of the ith observed covariate $\mathbf{X}_i$ of units with treatment dose greater than or equal to the median treatment dose, and $\overline{\mathbf{X}}_{i, k, \text{low}}$ that of units with treatment dose below the median. For instance, if the subclass consists of $5$ units, each with the first observed covariate $\mathbf{X}_i$ $\{1.5, 2, 1, 1.5, 2\}$ and treatment dose $\{0.1, 0.2, 0.3, 0.4, 0.5\}$, then $\overline{\mathbf{X}}_{i, k, \text{high}} = (1 + 1.5 + 2)/3 = 1.5$ and $\overline{\mathbf{X}}_{i, k, \text{low}} = (1.5 + 2)/2 = 1.75$ for this subclass. Let $d_{i} = \sum_{\Pi_k \in \boldsymbol\Pi} \overline{\mathbf{X}}_{i, k, \text{high}} - \sum_{\Pi_k \in \boldsymbol\Pi} \overline{\mathbf{X}}_{i, k, \text{low}}$ denote the difference in means of the $i$th covariate, and define $\textsf{SS} = \sum^d_{i = 1} d^2_i$ to be the sum of the squared differences over all $d = 5$ or $10$ observed covariates. In an ideal randomization experiment where treatment dose assignment is indeed randomized, distributions of observed covariates in the high and low dose groups are identical, and $\textsf{SS}$ is small. Hence, smaller $\textsf{SS}$ values signal better overall balance. 

For each subclass $\Pi_k$, we further calculate $\mu(\Pi_k)$, the average absolute ``internal-node-minus-leaf" difference in $Z$ as defined in Section \ref{subsec: incorporate dose}. We report the minimum, $25$th, $50$th (median), and $75$th empirical quantiles of $\{\mu(\Pi_k),~k = 1, \dots, K\}$. Finally, we report the number of matched set $K$, and the average pairwise Mahalanobis distance and balance measure $\textsf{SS}$ before matching.

\subsection{Simulation results}

Figure \ref{fig: simulation II} summarizes the simulation results for $9$ selected measures when $d = 5$, $n = 2000$, and $c = -2$. Simulation results for the other cases are qualitatively similar, and details can be found in Supplementary Material C.4.

We observe three consistent trends. First, when the dose caliper $\tau_0 = 0$ and the only goal of statistical matching is homogeneity in covariates $\mathbf{X}$, non-bipartite full matching and non-bipartite pair matching produce similar matched sets and have very similar performance with respect to all measures. Second, we observe that the number of matched sets in non-bipartite full matching decreases as $\tau_0$ increases. Third, for both full matching and pair matching, all four homogeneity measures \textsf{HM1} to \textsf{HM4} deteriorate as the dose caliper $\tau_0$ increases; however, non-bipartite full matching is capable of striking a better balance between homogeneity in covariates and heterogeneity in doses compared to non-bipartite pair matching. In fact, when $d = 5$, $n = 2000$, $c = -2$, and $\tau_0 \geq 0.3$, non-bipartite full matching outperforms non-bipartite pair matching \emph{simultaneously} in all $8$ measurements of matched-sets homogeneity ($4$ quantiles of $\nu(\Pi_k)$ and \textsf{HM1} through \textsf{HM4}), $4$ measurements of heterogeneity in treatment/encouragement doses ($4$ quantiles of $\mu(\Pi_k)$), \emph{and} the overall balance \textsf{SS}. For instance, when $\tau_0 = 0.4$, the median within-matched-sets Mahalanobis distance (i.e., $50$th of $\nu(\Pi_k)$) is equal to $0.898$ for pair matching and $0.756$ for full matching, and the overall balance measurement \textsf{SS} is $0.389$ for pair matching and as small as $0.105$ for full matching.

\begin{figure}
    \centering
    \includegraphics[width=1\textwidth]{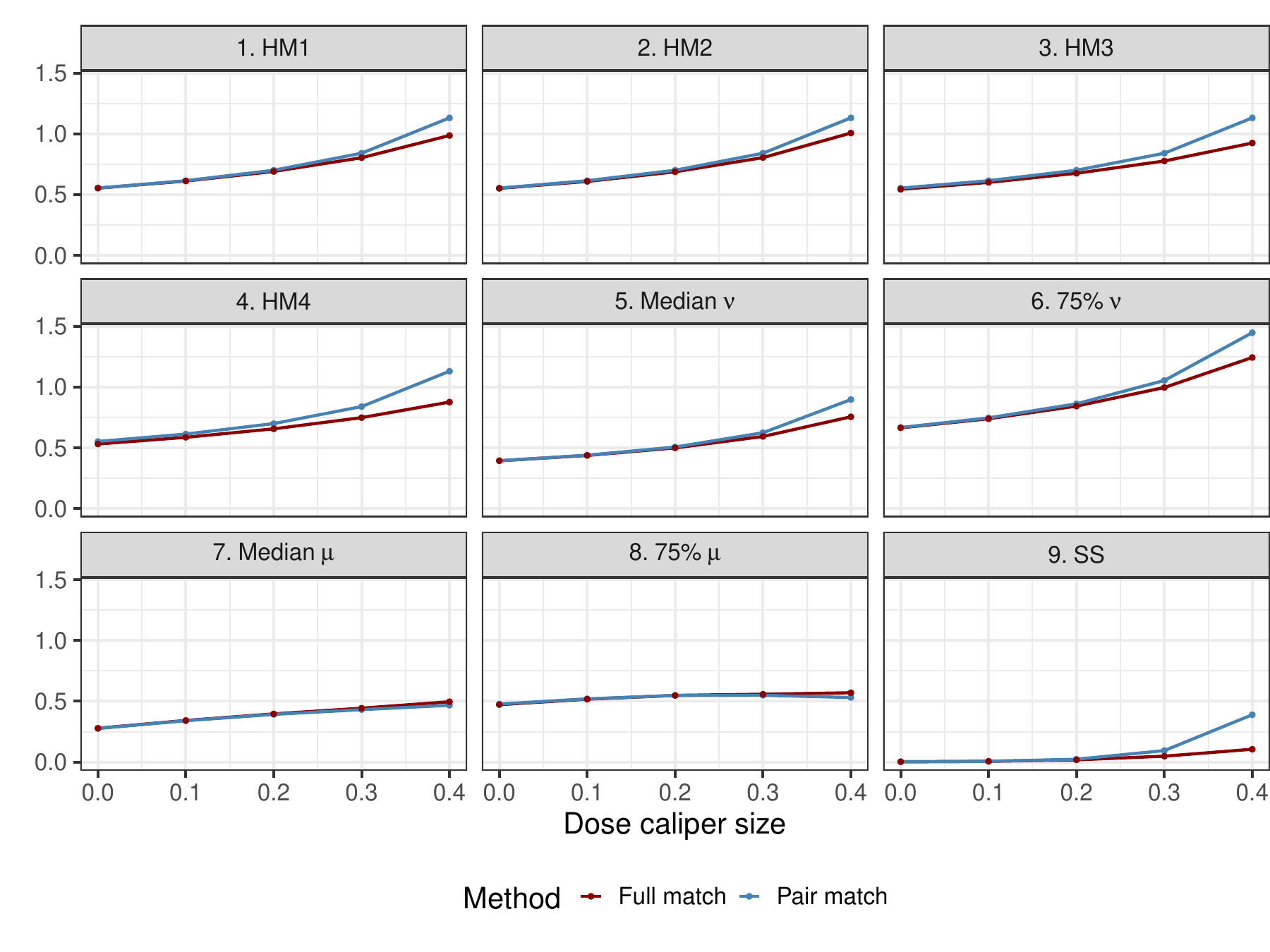}
    \caption{Simulation results: Comparing the non-bipartite pair match and non-bipartite full match when $d = 5$, $n = 2000$, and $c = -2$. The average pairwise Mahalanobis distance before matching is $10$ and $\textsf{SS}$ before matching is $1.010$. The number of matched pairs is each non-bipartite pair match is $1000$, and the average number of matched sets in a non-bipartite full match is $969$, $964$, $951$, $918$, and $842$ when the dose caliper size increases from $0$ to $0.4$.}
    \label{fig: simulation II}
\end{figure}

Finally, Figure \ref{fig: visualize} helps visualize the difference between an optimal non-bipartite pair match structure and an optimal non-bipartite full match structure using a small simulated dataset with $d = 3$, $n = 50$, and $c = -2$. To facilitate data visualization, we do a principle component analysis (PCA) and plot each unit's dose against its first principle component (PC1). Top left panel and top right panel depict the match structure of the optimal non-bipartite pair match and optimal non-bipartite full match, both with $\tau_0 = 0.3$. Two bottom panels eliminate matched pairs that are identical in two matches and focus on the match structure that are different in two matches. It is evident that the full match (corresponding to the bottom right panel) tends to connect units that are more different in the dose (i.e., larger difference in the y-axis) but similar in the first PC (i.e., small difference in the x-axis), compare to the pair match (corresponding to the bottom left panel).

\begin{figure}[H]
    \centering
    \includegraphics[width = 0.95\textwidth]{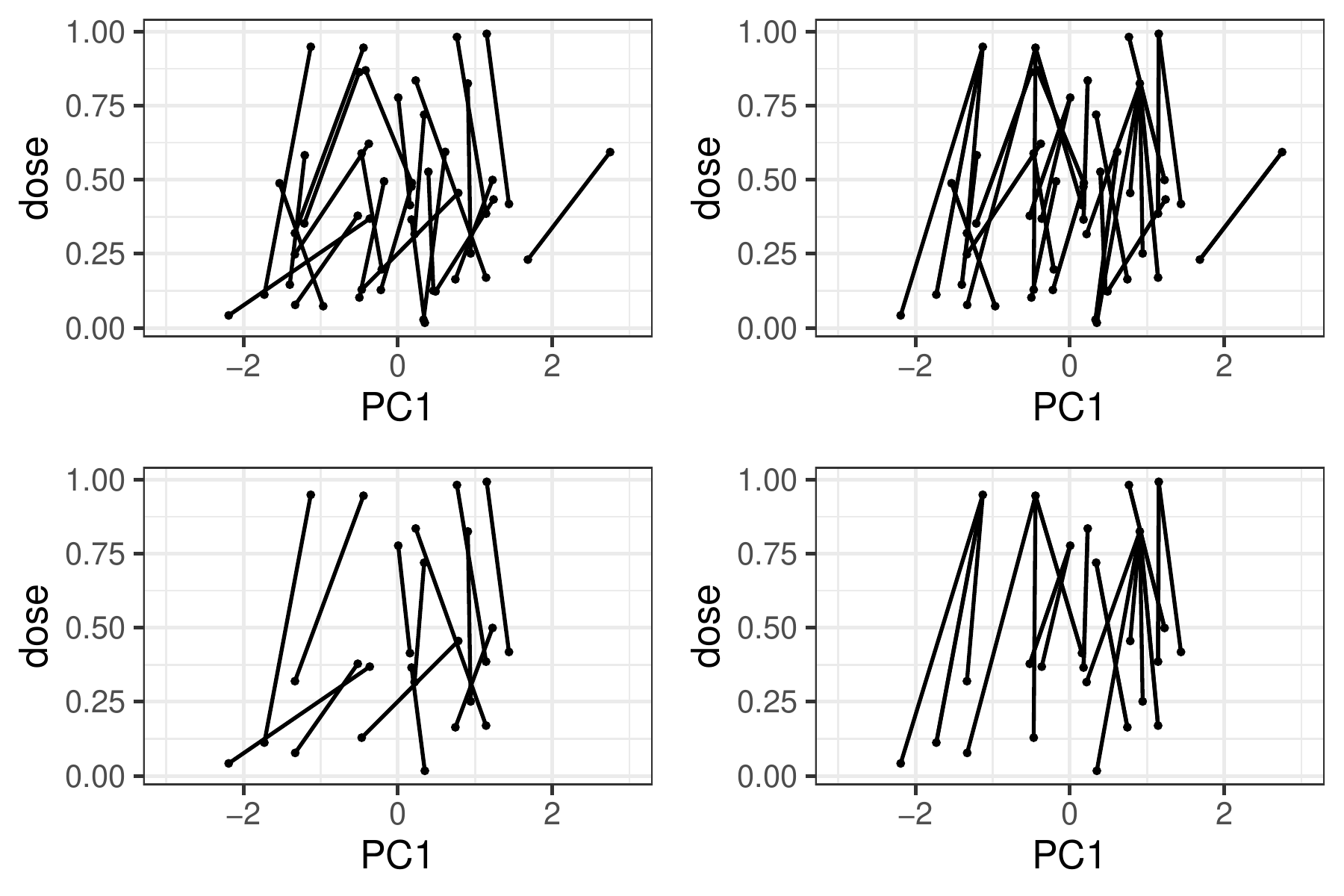}
    \caption{\small Visualizing the difference between a non-bipartite pair match and a non-bipartite full match. We generated a small dataset with $d = 3$, $n = 50$, and $c = -2$. Top left panel: optimal non-bipartite pair match with $\tau_0 = 0.3$. Top right panel: optimal non-bipartite full match with $\tau_0 = 0.3$. 
    Two bottom panels eliminate matched pairs that are identical in two matches and focus on the match structure that differ in two matches. In this simulated dataset, \textsf{HM1} = $1.46$ for pair match and $1.19$ for full match; average internal-node-minus-leaf difference in the dose is $0.48$ for pair match and $0.51$ for full match.}
    \label{fig: visualize}
\end{figure}

\section{The effect of TEE monitoring during CABG surgery on 30-day mortality}
\label{sec: real data study design}
\subsection{Data and study design}
We obtained data on patients undergoing isolated CABG surgery from Centers for Medicare and Medicaid Services (CMS). We identified patients' hospitals using the National Provider Identifier (NPI) numbers and obtained hospitals' characteristics data from the American Hospital Association Survey. Patient-level data were merged to hospitals' characteristics data using their unique NPI numbers. The study cohort consisted of all fee-for-service Medicare beneficiaries with a Part A (hospitalization) Medicare claim for isolated CABG surgery. Following \citet{MacKay2020_protocol}, we excluded (1) beneficiaries enrolled under managed care and not fee-for-service, (2) beneficiaries with less than six months of continuous enrollment in Medicare prior to the index admission for CABG surgery, (3) beneficiaries with age $< 65$ years, (4) beneficiaries without a cardiovascular or cardiac surgery-related Diagnosis Related Group (DRG) codes, (5) beneficiaries with a neurologic or stroke diagnosis as indicated by an ICD-9-cm code within the six months prior to the index admission or a stroke diagnosis with a “present on admission” (POA) indicator.

We follow \citet{MacKay2020_protocol,mackay2021association} and consider a cluster-level, instrumental variable analysis where each hospital defines a natural cluster and each hospital's preference for TEE usage (defined as the fraction of CABG surgeries using TEE monitoring) is considered a valid instrumental variable after controlling for patients' composition including average age, percentage of male patients, percentage of white patients, percentage of elective CABG surgeries, and percentage of patients having each of the following important comorbid conditions: arrhythmia, diabetes, congestive heart failure (CHF), hypertension, obesity, pulmonary diseases, and renal diseases, and hospital's characteristics including total hospitals beds, teaching status, presence of any cardiac intensive care unit, total number of full-time registered nurses, and annual cardiac surgical volume. Our goal in the design stage is to divide $1,217$ hospitals into subclasses with good subclass homogeneity, overall balance, and good separation in their encouragement doses. 

\subsection{Matched samples}
The first $4$ columns of Table \ref{tbl: real data balance table} summarize the patient composition and hospital characteristics of hospitals whose preference for using TEE during CABG surgery is above the median preference and those below the median preference. We observe a systematic difference between the ``above median" and ``below median" groups before matching: many standardized differences (defined as the difference in means divided by the standard deviation) are above $0.1$ and two-sample Kolmogorow--Smirnov tests suggest that the distribution of $6$ covariates, including annual cardiac surgical volume, hospital beds, etc, are statistically different at $0.01$ level.  

We then applied the developed non-bipartite full matching algorithm to the data using a dose-incorporating distance $\delta'(i, j) = \delta(i, j) + C \times \mathbbm{1}\{|Z_i - Z_j| \leq \tau_0\}$ with $\delta(i, j)$ being the Mahalanobis distance between $16$ observed pretreatment covariates, $C = 100,000$ a large penalty, $\lambda = 0$, and various choices of the dose caliper size $\tau_0$. We followed the advice in \citet{rubin2007design} and conducted the design without access to the outcome data in order to assure the objectivity of the design. 

In particular, non-bipartite full matching with $\tau_0 = 0.15$ divides these $1,217$ hospitals into $543$ matched pairs, $39$ matched sets of size $3$, $1$ of size $4$, and $2$ of size $5$. To get a sense of the balance after matching, we collect hospitals with higher doses in each matched set (including the one with median dose in a matched set with odd cardinality) and refer to them as the ``high dose" group. The ``high dose" group thus consists of $1\times543 + 2 \times 39 + 2 \times 1 + 3 \times 1 = 629$ hospitals. Similarly, we define the other hospitals as ``low dose" hospitals. In an ideal (yet unattainable) randomized controlled trial where the dose assignment within each matched set is indeed randomized, the ``high dose" and ``low dose" groups would have similar distributions of patient composition and hospital characteristics. Non-bipartite full matching seems to replicate this ideal experimental benchmark, as seen from the last $4$ columns in Table \ref{tbl: balance table}: the ``high dose" and ``low dose" groups have similar covariate distributions and in fact no Kolmogorov-Smirnov test is significant at $0.1$ level. Moreover, before matching, the median Mahalanobis distance among all $1,217$ hospitals is $14.14$, while the median ``average pairwise Mahalanobis distance" is as small as $1.54$ after matching. Matched sets also have a good separation in their encouragement doses: the average internal-node-minus-leaf difference in the encouragement dose is $0.46$ among all matched sets. In Supplementary Material E, we further report the covariate balance of non-bipartite full matches under other choices of $\tau_0$. We conduct inference with matched samples under $\tau_0 = 0.15$ because among all matches satisfying the stringent balance requirements (all standardized differences less than $0.1$ and no Kolmogorov-Smirnov tests significant at $0.05$ level), the match with $\tau_0 = 0.15$ produces the best separation in the encouragement doses. Compared to the original matched-pair design that discards approximately $20\%$ of hospitals, the full match design achieves similar balance while preserving all study units, and the outcome analysis based on the full match design is likely to have better generalizability (\citealp{cole2010generalizing}).

\begin{table}[ht]
\centering
\caption{  Covariate balance before and after non-bipartite full matching with a dose-incorporating Mahalanobis distance and $\tau_0 = 0.15$. $1,217$ hospitals are divided into $543$ matched pairs, $39$ matched sets of size $3$, $1$ of size $4$, and $2$ of size $5$. After matching, no two-sample Kolmogorov-Smirnov test comparing the covariate distributions in the ``high-dose" and ``low-dose" groups is significant at $0.05$ level.}
\label{tbl: real data balance table}
\resizebox{1.05\textwidth}{!}{
\begin{tabular}{lcccccccc}\hline
& \multicolumn{4}{c}{\multirow{2}{*}{\begin{tabular}{c}\Large{Before Matching}\end{tabular}}} & \multicolumn{4}{c}{\multirow{2}{*}{\begin{tabular}{c}\Large{After Matching}\end{tabular}}} \\ \\\cline{2-9}
& \multirow{4}{*}{\begin{tabular}{c}Below\\Median\\(n = 608)\end{tabular}} & \multirow{4}{*}{\begin{tabular}{c}Above\\Median\\(n = 609)\end{tabular}} & \multirow{4}{*}{\begin{tabular}{c}Std.\\ Diff.\end{tabular}} & \multirow{4}{*}{\begin{tabular}{c}K-S Test\\$P$-Value\end{tabular}} & \multirow{4}{*}{\begin{tabular}{c}Low\\ Dose\\(n = 588)\end{tabular}} & \multirow{4}{*}{\begin{tabular}{c}High\\Dose\\(n = 629)\end{tabular}} & \multirow{4}{*}{\begin{tabular}{c}Std.\\ Diff.\end{tabular}} & \multirow{4}{*}{\begin{tabular}{c}K-S Test\\$P$-Value\end{tabular}}\\ \\
\\ \\
\textsf{Patient Composition} &           &          &          \\ 
\hspace{0.5 cm}Mean age, yrs & \textbf{75.10} & \textbf{75.29} & -0.11 &\boldmath$<0.01$ & 75.18 & 75.21 & -0.02 &0.18\\ 
\hspace{0.5 cm}Male, \% & 0.67 & 0.69 & -0.15 &0.02 & 0.68 & 0.68 & -0.06 &0.92\\
\hspace{0.5 cm}White, \% & 0.85 & 0.85 & 0.01 & 0.05 & 0.85 & 0.85 & 0.01 & 0.20\\ 
\hspace{0.5 cm}Elective, \%  & 0.46 & 0.47 & -0.05 &0.32 & 0.47 & 0.47 & 0.00 &0.76\\ 
\hspace{0.5 cm}Diabetes, \%   & 0.17 & 0.17 & -0.02 &0.42 & 0.17 & 0.17 & 0.01 &0.93\\
\hspace{0.5 cm}Renal diseases, \%   & 0.09 & 0.09 & -0.07 &0.07 & 0.09 & 0.09 & -0.01 & 0.69\\
\hspace{0.5 cm}Arrhythmia, \%  & \textbf{0.11} & \textbf{0.12} & -0.10 & \boldmath$<0.01$ &0.11 & 0.11 & -0.02 & 0.08\\ 
\hspace{0.5 cm}CHF, \%  & 0.11 & 0.12 & -0.09 &0.12 & 0.12 & 0.12 & -0.01 &0.27\\ 
\hspace{0.5 cm}Hypertension, \%   & 0.29 & 0.30 & -0.05 &0.32 & 0.30 & 0.29 & 0.04 &0.96 \\ 
\hspace{0.5 cm}Obesity, \%  & 0.06 & 0.06 & -0.06 &0.03 & 0.06 & 0.06 & -0.03 &0.36\\ 
\hspace{0.5 cm}Pulmonary diseases, \% & \textbf{0.02} & \textbf{0.02} & -0.12 & \boldmath$<0.01$ &0.02 & 0.02 & -0.05 &0.43\\ 
\textsf{Hospital Characteristics} &           &          &          \\ 
\hspace{0.5 cm}Cardiac surgical volume & \textbf{456} & \textbf{571} & -0.21 &\boldmath$<0.001$ & 489 & 537 & -0.09 &0.09\\ 
\hspace{0.5 cm}Teaching hospital, yes/no & 0.15 & 0.20 & -0.13 &0.47 &0.18 & 0.18 & 0.01 &0.99\\ 
\hspace{0.5 cm}Hospital beds & \textbf{336} & \textbf{419} & -0.33 &\boldmath$<0.001$ & 370 & 386 & -0.06 &0.12 \\ 
\hspace{0.5 cm}Full-time registered nurses  & \textbf{534} & \textbf{722} & -0.33 &\boldmath $<0.001$ & 609 & 646 & -0.06 &0.14\\
\hspace{0.5 cm}Cardiac ICU, yes/no & 0.70 & 0.72 & -0.04 &0.99 & 0.71 & 0.71 & 0.02 &0.99\\  \hline
\end{tabular}}
\label{tbl: balance table}
\end{table}

\subsection{Statistical inference: notation, potential outcomes, and a cluster-level sharp null hypothesis}
Does using TEE during CABG surgery reduce patients' 30-day mortality? In this section, we generalize the cluster-level, non-bipartite pair match set-up considered in \citet{zhang2020bridging} to the current full match setting, and discuss how to test Fisher's sharp null hypothesis of no treatment effect under the new design.
 
Suppose we have formed $K$ matched sets, indexed by $k = 1, \dots, K$, each with $n_k$ hospitals, indexed by $j = 1, \dots, n_k$, so that index $kj$ uniquely identifies one hospital and there are a total of $N = \sum_{k = 1}^K n_k$ hospitals in total. Each hospital is associated with hospital-level covariates $\mathbf{x}_{kj}$ and a hospital-level continuous instrumental variable (or encouragement dose) $Z^{\text{obs}}_{kj}$. There are $N_{kj}$ patients in each hospital $kj$, indexed by $i = 1, \dots, N_{kj}$, so that index $kji$ uniquely identifies one patient. Each patient is associated with a treatment indicator $D^{\text{obs}}_{kji}$, outcome of interest $R^{\text{obs}}_{kji}$, and individual-level covariates $\mathbf{x}_{kji}$. In our application, we have formed $585$ matched sets so $K = 585$; $n_k$ is the number of hospitals in each matched set so $n_k = 2$, $3$, $4$, or $5$ in our design; hospital-level instrumental variable $Z_{kj} \in [0, 1]$ is hospital's preference for TEE during CABG surgery; $N_{kj}$ is the number of patients undergoing CABG surgery in hospital $kj$; $D_{kji}$ is a binary indicator equal to $1$ if patient $kji$ receives TEE monitoring and $0$ otherwise; $R_{kji}$ is patient $kji$'s 30-day mortality status; finally, $\mathbf{x}_{kj}$ describes hospital $kj$'s characteristics and $\mathbf{x}_{kji}$ patient $kji$'s characteristics. Following \citet{zhang2020bridging}, we assume that after controlling for patient composition and hospital characteristics, preference for TEE usage is a valid cluster-level instrumental variable.

Let $D_{kji}(Z_{kj} = z_{kj})$ denote the potential treatment received of patient $kji$ when the hospital-level IV $Z_{kj}$ is set to $z_{kj}$, and $ \mathbf{D}_{kj}(Z_{kj})$ is a shorthand for $\big(D_{kj1}(Z_{kj}), \dots, D_{kjN_{kj}}(Z_{kj})\big)$. Let $R_{kji}\big(Z_{kj} = z_{kj}, \mathbf{D}_{kj}(Z_{kj}) = \mathbf{d}_{kj}\big)$ denote unit $kji$'s potential outcome under $Z_{kj} = z_{kj}$ and $\mathbf{D}_{kj}(Z_{kj}) = \mathbf{d}_{kj}$. Under exclusion restriction, we have $R_{kji}\big(Z_{kj}, \mathbf{D}_{kj}(Z_{kj})\big) = R_{kji}\big(\mathbf{D}_{kj}(Z_{kj})\big)$. Finally, let $\mathcal{Z}^{\text{obs}}_{k} = \{Z^{\text{obs}}_{k1}, \dots, Z^{\text{obs}}_{kn_{kj}}\}$ denote the collection of IV doses in matched set $k$.

A cluster-level Fisher's sharp null hypothesis states that 
\begin{equation}
\begin{split}
     H_{0, \text{sharp}}: \quad &N^{-1}_{kj}\sum_{i = 1}^{N_{kj}} R_{kji}\big(\mathbf{D}_{kj}(Z_{kj} = z)\big) - N^{-1}_{kj}\sum_{i = 1}^{N_{kj}} R_{kji}\big(\mathbf{D}_{kj}(Z_{kj} = z')\big) \\
     &\qquad=\beta \left(N^{-1}_{kj}\sum_{i = 1}^{N_{kj}} D_{kji}(Z_{kj} = z) - N^{-1}_{kj}\sum_{i = 1}^{N_{kj}} D_{kji}(Z_{kj} = z')\right), 
\end{split}
\end{equation}
for all $k$, $j$, and $z, z' \in \mathcal{Z}^{\text{obs}}_k$ such that $z' < z$. The null hypothesis $H_{0, \text{sharp}}$ generalizes the proportional treatment effect model in \citet{small2008war} and \citet{zhang2020bridging}, and states that the mean difference of the hospital-aggregate potential outcomes when comparing any pair of two IV doses $z, z' \in \mathcal{Z}^{\text{obs}}_k$ is proportional to the mean difference of the potential treatment received under IV dose $z$ and $z'$ with structural parameter $\beta$; see \citet{baiocchi2010building} and \citet{zhang2020bridging} for other causal null hypotheses that may be of interest. We consider testing the causal null hypothesis $\beta = 0$, i.e., TEE received during CABG has no effect whatsoever on patients' 30-day mortality, against $\beta < 0$, i.e., TEE received during CABG lowers patients' 30-day mortality.

\subsection{Randomization-based inference}
There are three key ingredients to perform a Fisher-style randomization-based test: a sharp null hypothesis, a randomized treatment assignment scheme, and a test statistic (\citealp{rosenbaum2002observational, rosenbaum2010design, Ding2016variation}).

\begin{enumerate}
    \item \textbf{Sharp null hypothesis:} $H_{0, \text{sharp}}$ is sharp null hypothesis that allows us to impute the potential aggregate-outcome of cluster $kj$ under any IV doses $z \in \mathcal{Z}^{\text{obs}}_{k}$; see, e.g., \citet[Section 3.1]{zhang2020bridging}.
    \item \textbf{Randomized treatment dose assignment:} 
In a typical matched-pair design with $I$ matched pairs, there are a total of $2^I$ possible randomization configurations. In a full match design, within each matched set of $n_k$ hospitals, there are $n_k!$ many IV dose assignments, each with equal probability; therefore, there are a total of $\prod^K_{k = 1} n_k!$ randomizations induced by a full match design. Let $\mathcal{Z}$ denote this collection of all randomizations and $\mathbf{z} \in \mathcal{Z}$ one realization.

\item \textbf{Test statistic:} 
In principle, any test statistic $t(\mathbf{Z}, \mathbf{R}(\mathbf{Z}))$ that depends on the treatment dose assignment $\mathbf{Z}$ and the potential outcomes only via potential outcomes' dependence on $\mathbf{Z}$ can be combined with the randomization scheme to deliver a valid test for $H_{0, \text{sharp}}$; see, e.g., \citet{Ding2016variation}. With a binary treatment, a commonly-used test statistic for a full match design is the rank-sum test; see \citet{rosenbaum2002observational, rosenbaum2004design, heng2019increasing}. We modify the rank-sum test statistic to reflect the continuous dose. Let $\mathbf{Z} = \{Z^{\text{obs}}_{11}, \dots, Z^{\text{obs}}_{Kn_k}\}$, $R^{\text{obs}}_{kj\bigcdot} = N^{-1}_{kj} \left\{\sum_{i = 1}^{N_{kj}} R^{\text{obs}}_{kji}\right\}$, and $\mathbf{R} = \{R^{\text{obs}}_{11\bigcdot}, \dots, R^{\text{obs}}_{Kn_k\bigcdot}\}$. Consider the following double rank sum statistic:
\begin{equation}
\label{eqn: double rank test statistic}
    T_{\text{double rank}} = \frac{1}{N^2} \sum_{k = 1}^K \sum_{j = 1}^{n_k} q_1(Z^{\text{obs}}_{kj} \mid \mathbf{Z})\times q_2(R^{\text{obs}}_{kj\bigcdot} \mid \mathbf{R}),
\end{equation}
where $q_1(Z^{\text{obs}}_{kj} \mid \mathbf{Z})$ is the rank of $Z^{\text{obs}}_{kj}$ among all doses $\mathbf{Z}$, and $q_2(R^{\text{obs}}_{kj\bigcdot})$ the rank of $R^{\text{obs}}_{kj\bigcdot}$ among all the responses.
\end{enumerate}

Researchers first impute all missing potential outcomes under $H_{0, \text{sharp}}$ and then enumerate all $|\mathcal{Z}| = \prod^K_{k = 1} n_k!$ possible dose assignments. For each enumerated $\mathbf{Z}' = \mathbf{z} \in \mathcal{Z}$, calculate the corresponding $\mathbf{R}'$ under $\mathbf{Z}'$ and the test statistic $T'_{\text{double rank}}$. The distribution of $T'_{\text{double rank}}$ is then the exact null distribution of the test statistic $T_{\text{double rank}}$ under $H_{0, \text{sharp}}$ and conditional on the matched samples. By comparing $T_{\text{double rank}}$ to this exact null distribution, the exact $p$-value is obtained. In practice, researchers may sample with replacement from $\mathcal{Z}$ and report a Monte Carlo $p$-value.

\subsection{Results}
For $585$ matched sets we formed in the design stage, we generated the reference distribution using $1,000,000$ samples from all possible $2^{543} \times 3^{39} \times 4^1 \times 5^2$ randomizations; see Figure \ref{fig: real data reference dist}. We calculated $T_{\text{double rank}} = 294.27$; hence, one-sided $p$-value is $0.020$ and the null hypothesis $H_{0, \text{sharp}}$ is rejected at $0.05$ level in favor of the alternative hypothesis that $\beta < 0$, i.e., using TEE during CABG surgery lowers patients' 30-day mortality rate.

\begin{figure}[h]
    \centering
    \includegraphics[width = 0.8\columnwidth]{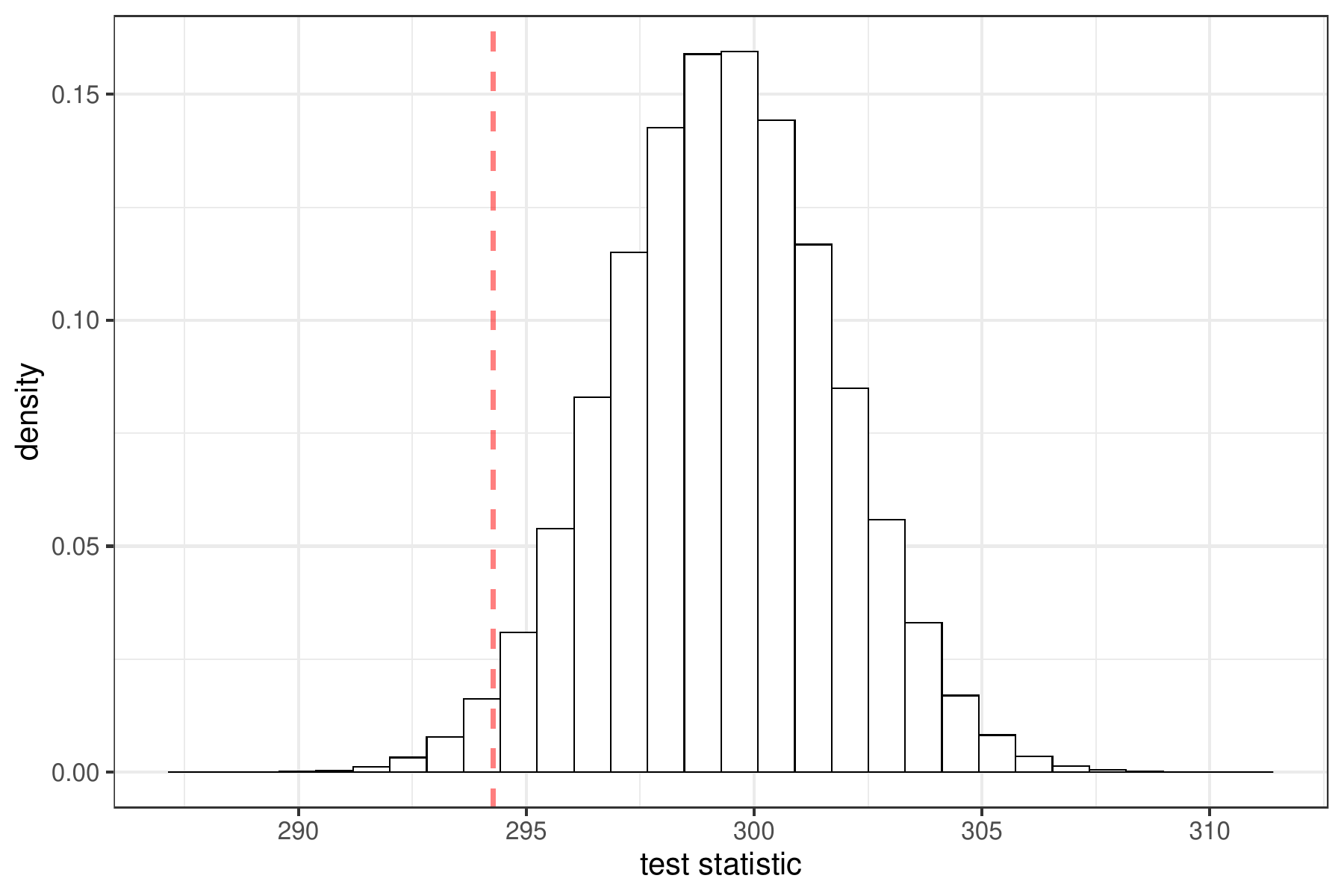}
    \caption{\small Reference distribution and the test statistic evaluated at the observed data. Exact one-sided $p$-value is $0.020$.}
    \label{fig: real data reference dist}
\end{figure}

\section{Discussion}
\label{sec: discussion}
In this paper we have systematically studied statistical matching and subclassification with a many-level or continuous exposure dose. We propose two optimality criteria for subclassification, each based on a natural subclass homogeneity measure. We characterize the relationship between these two criteria and leverage this relationship to develop an efficient polynomial-time algorithm that finds a subclassification that is guaranteed to be optimal with respect to one criterion and near-optimal with respect to the other criterion. 

Our developed algorithm also allows empirical users to control separation in the exposure dose and cardinalities of formed subclasses. There are three tuning parameters involved in our flexible algorithm: dose caliper size $\tau_0$, dose caliper penalty $C$, and cardinality penalty $\lambda$. In many practical situations, we recommend setting $\tau_0$ to a minimum dose difference that would yield a meaningful difference in the potential outcomes, and $C = \infty$ (or a very large number) to enforce dose separation specified by $\tau_0$. This way to specify tuning parameters $(\tau_0, C)$ is similar to setting propensity score calipers in a bipartite match; see, e.g., \citet[Section 2.3]{zhang2021match2C}. We recommend setting the cardinality penalty $\lambda = 0$ by default to deliver an optimal full match; in the case where some subclasses have too many study units, $\lambda$ should be gradually increased.

Our extensive simulations suggest that non-bipartite matching combined with regression adjustment helps remove bias in parametric causal inference; thus, we would recommend routinely using non-bipartite matching as a pre-processing step, as advocated by many researchers (\citealp{rubin1973matching, rubin1979using, ho2007matching, stuart2010matching}) in a binary treatment setting. Moreover, we found non-bipartite full match is advantageous over non-bipartite pair match in separating the treatment/encouragement doses \emph{and} maintaining good subclass homogeneity and overall balance; therefore, the new design may be particularly useful in instrumental variable studies where separation of the IVs (or encouragement doses) would render outcome analysis much more efficient (\citealp{baiocchi2010building}).


\begin{supplement}
\stitle{Proofs, additional simulation studies, and more details on the application}
\sdescription{Supplementary Material A contains a detailed literature review on bipartite and non-bipartite matching. Supplementary Material B.1-B.5 contain proofs of Lemma \ref{lemma: bound in nu star and nu}, Lemma \ref{lemma: nu star is cost of an edge cover}, Corollary \ref{cor: bound nu between nu_start and 2 nu_star}, Proposition \ref{prop: sandwich identity} and \ref{prop: min cost edge cover yields an optimal partition}. Supplementary Material B.6 proves that the output from the modified Algorithm 1 $F^\ast_{\text{star}, \lambda}$ induces a subclassification that is optimal with respect to $\nu^\lambda_{\text{star}}(\boldsymbol\Pi; \mathbf{i}^\ast, \mathcal{W}^{\text{suit}})$. Supplementary Material C.1 and C.2 illustrate a dose caliper and choice of $\lambda$ using simulation studies. Supplementary Materials C.3 and C.4 provide additional simulation results. Supplementary Material D illustrates how to find a minimum-cost edge cover. Supplementary Material E provides further details on statistical matching in the application with different choices of the tuning parameter $\tau_0$.}
\end{supplement}
\begin{supplement}
\stitle{nbpfull\_0.1.0.zip}
\sdescription{\textsf{R} code implementing the proposed non-bipartite full match algorithm.}
\end{supplement}


\bibliographystyle{imsart-nameyear} 
\bibliography{paper-ref}       

\end{document}